\crefname{relctr}{relation}{Relation} %% <- if you use cleveref
\newtheorem{proposition}{Proposition}
\newtheorem{lemma}{Lemma}
\newtheorem{definition}{Definition}
\newtheorem{remark}{Remark}
\newtheorem{fact}{Basic fact}
\newtheorem{assumption}{Assumption}
\newtheorem{theorem}{Theorem}
\newtheorem{corollary}[theorem]{Corollary}
\newtheorem{example}{Example}
\newtheorem{condition}[assumption]{Condition~C\ignorespaces}
\providecommand{\R}{\mathbb{R}} % Reals
\providecommand{\N}{\mathbb{N}} % Naturals
\providecommand{\qq}{\mathbf{q}}
\providecommand{\0}{\mathbf{0}}
\providecommand{\1}{\mathbf{1}}
\providecommand{\bb}{\mathbf{b}}
\providecommand{\ee}{\mathbf{e}}
\providecommand{\vv}{\mathbf{v}}
\providecommand{\xx}{\mathbf{x}}
\providecommand{\mE}{\mathbf{E}}
\providecommand{\mI}{\mathbf{I}}
\providecommand{\mM}{\mathbf{M}}
\providecommand{\mQ}{\mathbf{Q}}
\providecommand{\cB}{\mathcal{B}}
\providecommand{\cI}{\mathcal{I}}
\providecommand{\cJ}{\mathcal{N}}
\providecommand{\cM}{\mathcal{M}}
\providecommand{\cQ}{\mathcal{Q}}
\providecommand{\T}{^{\top}} 
\renewcommand{\tilde}{\widetilde} % make \tilde wilder
\providecommand\ie{\emph{i.e.}}
\providecommand\eg{\emph{e.g.}}
\providecommand\st{\textit{s.t. }}
\title{Linear Equations with Min and Max Operators: Computational Complexity}
\author {
% Authors
Krishnendu Chatterjee\textsuperscript{\rm 1},
Ruichen Luo\textsuperscript{\rm 1},
Raimundo Saona\textsuperscript{\rm 1},
Jakub Svoboda\textsuperscript{\rm 1}
}
\begin{document}

\maketitle

\begin{abstract}
	We consider a class of optimization problems defined by a system of linear equations with min and max operators.
	This class of optimization problems has been studied under restrictive conditions, such as,
	(C1)~the halting or stability condition;
	(C2)~the non-negative coefficients condition;
	(C3)~the sum up to~$1$ condition; and
	(C4)~the only min or only max operator condition.
	Several seminal results in the literature focus on special cases.
	For example, turn-based stochastic games correspond to conditions C2 and C3; and Markov decision process to conditions C2, C3, and C4.
	However, the systematic computational complexity study of all the cases has not been explored, which we address in this work.
	Some highlights of our results are: with conditions C2 and C4, and with conditions C3 and C4, the problem is \textsc{NP}-complete, whereas with condition C1 only, the problem is in \textsc{UP} $\cap$ \textsc{coUP}.
	Finally, we establish the computational complexity of the decision problem of checking the respective conditions.
\end{abstract}

\section{Introduction}

\noindent\textbf{Optimization problems.}
Optimization problems are in the core of most applications of artificial intelligence.
Prominent examples include:
learning algorithms, where stochastic gradient descent approaches are fundamental~\cite{bottou2010large};
game theory, where the connection between matrix games and linear programming was established in the seminal work of~\citeauthor{von1953theory};
sequential decision making, where the solution is obtained via Markov decision processes which can be solved via linear programming~\cite{puterman2014markov}.
Hence, understanding the computational complexity of different classes of optimization problems is a fundamental question in artificial intelligence and learning and an active research area, for example, linear programming~\cite{dantzig2002linear}, convex optimization~\cite{nesterov2013introductory}, semi-definite programming~\cite{gartner2012approximation}, to name a few.

\noindent\textbf{Linear equations with min-max operator.}
In this work, we consider the class of optimization problems defined by a system of linear equations along with min and max operators.
Since linear equalities are very basic operations, and min and max operators are the most fundamental optimization operators, this corresponds to a very natural class of optimization problems, for example, this is a natural generalization of linear equations with Boolean variables.

\noindent\textbf{Restrictive conditions.}
Given the generality of the above class, the associated decision problems are computationally hard.
Hence, several natural sub-classes of the above optimization problem have been studied in the literature and we first recall the restrictive conditions:
(C1)~the halting or the stability condition;
(C2)~the non-negative coefficients condition;
(C3)~the sum up to~$1$ condition; and
(C4)~the min-only or max-only operator condition.
% Condition C1 has a natural interpretation in Markov chains, Markov Decision Processes, and Stochastic Games as a geometric convergence to termination irrespective of the action of the players.
% See Section~\ref{Section: Motivating examples} for concrete examples illustrating all these conditions.
% For example, in the example ``co-evolution in ecosystem'', condition C1 is equivalent to the condition that all species will go extinct when there is no external intervention.

\noindent\textbf{Classical complexity results from the literature.}
Several seminal results consider subsets of the above conditions.
First, conditions C2 and C3 together represent stochasticity or probability distributions.
With these two conditions and both max and min operators, we obtain the very well-studied class of simple stochastic games (SSGs).
SSGs have been studied in the seminal work of~\citet{condon1992complexity} as an important sub-class of Shapley games~\cite{shapley1953stochastic}.
The computational problem of SSGs is known to be in \textsc{NP} $\cap$ \textsc{coNP}~\cite{condon1992complexity}, as well as \textsc{UP} $\cap$ \textsc{coUP}\footnote{A language is in \textsc{UP} (or \textsc{coUP}) if a given \textsc{Yes} (or \textsc{No}) answer can be verified in polynomial time, and the verifier machine only accepts at most one answer for each problem instance~\cite{hemaspaandra1997unambiguous}.}~\citep{chatterjee2011reduction}. The existence of polynomial-time algorithms for SSGs is a major and long-standing open problem in game theory. We mention a problem is SSG-hard if there is a polynomial-time reduction from the SSGs; hence for SSG-hard problems polynomial-time algorithms require a major breakthrough. If we have conditions C2, C3, and C4, then this corresponds to the class of Markov decision processes (MDPs), where there is a single player, as opposed to SSGs where there are two adversarial players.
Hence, under conditions C2, C3, and C4 the problems are polynomial-time solvable via linear programming.

\begin{table*}[ht]
	\centering
	\begin{tabular}{ |l|c|c| }
		\hline
		${\textbf{\{} \textbf{C\ref{condition:non-negative},C\ref{condition:one-type}} \textbf{\}}}$,
		${ \textbf{\{} \textbf{C\ref{condition:sum-to-1},C\ref{condition:one-type}} \textbf{\}}}$,
		${\{ \textnormal{C\ref{condition:non-negative}} \}}$,
		${\{ \textnormal{C\ref{condition:sum-to-1}} \}}$, ${\{ \textnormal{C\ref{condition:one-type}} \}}$,
		${\emptyset}$
		& \textsc{NP-complete} & \cref{thm:np complete results} \\
		\hline
		${ \textbf{\{} \textbf{C\ref{condition:conv_halting},C\ref{condition:sum-to-1},C\ref{condition:one-type}} \textbf{\}} }$,
		${ \textbf{\{} \textbf{C\ref{condition:conv_halting},C\ref{condition:sum-to-1}} \textbf{\}} }$,
		${ \textbf{\{} \textbf{C\ref{condition:conv_halting},C\ref{condition:one-type}} \textbf{\}} }$,
		${ \textbf{\{} \textbf{C\ref{condition:conv_halting}} \textbf{\}} }$
		& \multirow{3}{*} { \begin{tabular}[]{@{}c@{}} \textsc{UP} $\cap$ \textsc{coUP} \\ (SSG-hard) \end{tabular} } & \multirow{2}{*} { \begin{tabular}[]{@{}c@{}} Corollary~\ref{cor:up cap coup results}~\&~\ref{cor:equivalence and ssg hardness} \end{tabular} }  \\
		\cdashline{1-1}
		${ \textbf{\{} \textbf{C\ref{condition:conv_halting},C\ref{condition:non-negative}} \textbf{\}} }$
		& & \\
		\cdashline{1-1} \cdashline{3-3}
		$\{ \textnormal{C\ref{condition:conv_halting},C\ref{condition:non-negative},C\ref{condition:sum-to-1}} \}$,
		$\{ \textnormal{C\ref{condition:non-negative},C\ref{condition:sum-to-1}} \}$
		& & \cref{thm:ssg complexity} \\
		\hline
		$\{ \textnormal{C\ref{condition:conv_halting},C\ref{condition:non-negative},C\ref{condition:sum-to-1},C\ref{condition:one-type}} \}$,
		$\{ \textnormal{C\ref{condition:non-negative},C\ref{condition:sum-to-1},C\ref{condition:one-type}} \}$,
		${ \textbf{\{} \textbf{C\ref{condition:conv_halting},C\ref{condition:non-negative},C\ref{condition:one-type}} \textbf{\}} }$
		& \textsc{PTIME} & \cref{cor:PTIME results} \\
		\hline
	\end{tabular}
	\caption{%
		The complexity of the LEMM decision problems under all subsets of conditions.
		{If $X \subseteq Y$ represents two subsets of conditions, then the LEMM decision problem under $X$ that has fewer conditions is more general.} 
		The table describes all the results, with our main results in bold. Moreover, for the problems in a row there is a polynomial-time equivalence.
	}
	\label{tab:complexity results}
\end{table*}

\begin{table*}[ht]
	\centering
	\begin{tabular}{|l|c|c|}
		\hline
		{$ \textbf{\{} \textbf{C\ref{condition:conv_halting}}  \textbf{\}} $},
		{$ \textbf{\{} \textbf{C\ref{condition:conv_halting},C\ref{condition:sum-to-1}}  \textbf{\}} $},
		{$ \textbf{\{} \textbf{C\ref{condition:conv_halting},C\ref{condition:one-type}}  \textbf{\}} $},
		{$ \textbf{\{} \textbf{C\ref{condition:conv_halting},C\ref{condition:sum-to-1},C\ref{condition:one-type}}  \textbf{\}} $} & \textsc{coNP}-hard & \cref{cor:coNP-hard checks} \\
		\hline
		{$ \textbf{\{} \textbf{C\ref{condition:conv_halting},C\ref{condition:non-negative}}  \textbf{\}} $},
		{$ \textbf{\{} \textbf{C\ref{condition:conv_halting},C\ref{condition:non-negative},C\ref{condition:one-type}}  \textbf{\}} $},
		$\{\textnormal{C\ref{condition:conv_halting},C\ref{condition:non-negative},C\ref{condition:sum-to-1}}\}$,
		$\{\textnormal{C\ref{condition:conv_halting},C\ref{condition:non-negative},C\ref{condition:sum-to-1},C\ref{condition:one-type}}\}$ & \textsc{PTIME} & \cref{cor:PTIME checks} \\
		\hline
	\end{tabular}
	
	\caption{The complexity of the condition decision problems for all subsets of conditions in the presence of condition~C\ref{condition:conv_halting}.
		In the absence of condition C1, all other conditions can be checked in linear time.
		Our results are in bold.}
	\label{tab:condition decision}
\end{table*}

\noindent\textbf{Classical results from the literature related to various connections.}
It was shown in~\cite{condon1992complexity} that, if we have conditions C2 and C3, then these two conditions together imply condition C1 as well.
Moreover, recent results show that robust versions of MDP problems can be reduced to SSGs~\cite{chatterjee2024solving}.
Thus, robust versions of optimization problems with only max or only min operators naturally give rise to both operators.

\noindent\textbf{Open problems.}
While there are several studies that consider special sub-classes of the above problem, a systematic study of the computational complexities for all cases has been missing in the literature and is the focus of this work.
For example, condition C1 is a natural stability condition, and the complexity of this problem without the restriction of C2 and/or C3 has not been addressed in the literature.

\noindent\textbf{Our contributions.}
In this work, we present the computational complexity landscape for the class of optimization problem of linear equations with min and max operators under all subsets of restrictive conditions.
In particular, some highlights of our results are as follows:
(i)~with conditions C2 and C4, and with conditions C3 and C4, the problem is \textsc{NP}-complete;
(ii)~with conditions C1 only, the problem is in \textsc{UP} $\cap$ \textsc{coUP};
(iii)~even with conditions C1, C3, and C4, the problem is SSGs-hard and hence proving the existence of a polynomial-time algorithm requires a major breakthrough.
These complexity results are summarized in \Cref{tab:complexity results}.
Finally, we consider the complexity of the decision problem of checking the conditions.
We show that checking condition C1 is \textsc{coNP}-hard, but checking conditions C1 and C2 can be achieved in polynomial time. These complexity results are summarized in \Cref{tab:condition decision}.

\noindent\textbf{Technical contributions.}
Our main technical contributions are as follows.
First, we show that if we have only condition C1, but not condition C2 and C3, then several classical properties of SSGs, e.g., monotonicity of solutions and the $\min \max = \max \min$ property do not hold.
We provide illuminating examples (see Example~\ref{example:non-monotonicity}--Example~\ref{example:misleading subsolution}) to illustrate this aspect.
Second, even though these fundamental properties break, we still show the existence of a unique solution under condition C\ref{condition:conv_halting} (see Lemma~\ref{lemma:existence&uniqueness}), which allows us to establish the \textsc{UP} $\cap$ \textsc{coUP} result.
Details and proofs omitted due to lack of space are presented in the Appendix.

\section{Definitions}

In this section, we present the basic definitions.
We define the linear equations with min and max operators (LEMM), the associated decision problem, and finally the conditions. We use the following notation: for an integer $k$, we denote $[1, k] \cap \mathbb{N}$ by $[k]$.

\begin{definition}[Linear equations with min and max operators (LEMM)]
	\label{def:lemm}
	Consider $(n_1,n_2,n,\cJ,\qq,\bb)$ such that the following conditions hold:
	(a)~$n_1, n_2, n \in \mathbb{N}$,
	(b)~$n \ge n_1+n_2$,
	(c)~$\emptyset \subsetneq \cJ (i) \subseteq [n]$ for $i \in [n_1+n_2]$, (d)~$\qq_k \in \R^n$ for $n_1+n_2 < k \le n$, and (e)~$\bb = [0, \ldots, 0, b_{n_1+n_2+1}, \ldots, b_{n}] \T \in \R^n$. A vector  $\xx = [x_1, \ldots, x_n] \T \in \R^n$ that satisfies the following system of linear equations with min and max operators (LEMM) is called feasible:

	\begin{subequations} \label{eq:Problem}
		\begin{empheq}[left = \empheqlbrace]{alignat=2}
			x_i &= \min _{l \in \cJ (i)} \ x_{l} , &\quad & 1 \le i \le n_1, \label{eq:min line} \\
			x_j &= \max _{l \in \cJ (j)} \ x_{l} , &\quad & n_1 < j \le n_1+n_2, \label{eq:max line} \\
			x_k &= \qq_k\T \xx + b_k, &\quad & n_1 + n_2  < k \le n . \label{eq:affine line}
		\end{empheq}
	\end{subequations}
\end{definition}

In \cref{eq:Problem}, $x_1, \ldots, x_{n_1}$ in \eqref{eq:min line} are the min variables, $x_{n_1+1}, \ldots, x_{n_1+n_2}$ in \eqref{eq:max line} are the max variables, and $x_{n_1+n_2+1}, \ldots, x_{n}$ in \eqref{eq:affine line} are the affine variables.

\begin{definition}[LEMM decision problem]
	The LEMM decision problem is: given an LEMM with $(n_1,n_2,n,\cJ,\qq,\bb)$, an index $i \in [n]$, and a threshold $\beta \in \R$, determine whether there exists a feasible solution $\xx$ to the LEMM, \st $x_i < \beta$.
\end{definition}

\begin{remark}[Generalities]
	We discuss the generality of the LEMMs.
	\begin{itemize}
		\item Any finite nested min, max, and linear operators can be separated by substitution. For instance, consider the following equation
		$$
		z_i = \min _{j \in \cJ} \ f_{i,j}(z) ,
		$$
		where $f_{i,j}$ contains finite (nested) min, max, and linear operators.
		It can be rewritten as
		$$
		z_i = \min _{j \in \cJ} \ x_{i,j} \,, \qquad x_{i,j} = f_{i,j} (z) \,.
		$$
		
		\item Boolean variables can be encoded using the min and/or max, and linear operators (see \eg\ \cref{eq:reduction to partition problem} in the Appendix), and hence LEMMs generalize
		linear equations with Boolean variables.
		
	\end{itemize}
\end{remark}

While the LEMM decision problem is quite general, which implies computational hardness, several restrictive sub-classes have been considered in the literature.
We first introduce some notations and then describe the conditions.

\noindent\textbf{Notations.}
We introduce the following notations.
Let the indicator vector
$$
\ee_i \coloneqq [ \delta_{i,1}, \ldots, \delta_{i,n} ]\T ,\quad i \in [n] \,,
$$
where \(\delta_{i,j}\) is $1$ if $i = j$ and $0$ otherwise.
Denote the set system of linear equations $\xx = \mQ \xx + \bb$ induced by forcing each min and max variable $x_i$ to be equal to $x_{\ell(i)}$, i.e., fixing one choice for each min and max variable, by
\begin{align*}
	\cQ
	&\coloneqq
	\{ \mQ_i \}_{i \in \mathcal{I}} \\
	&\coloneqq \Big\{ [\ee_{\ell(1)}, \ldots, \ee_{\ell(n_1+n_2)}, \qq_{n_1+n_2+1}, \ldots, \qq_{n}] \T \\
	&\qquad\qquad \Big|\; \ell (j) \in \cJ (j) \text{ for all } j \in [n_1+n_2] \Big\} \,.
\end{align*}
The convex hull of $\cQ$ is denoted as $\mathbf {conv} (\cQ)$:
$$
\left\{ \sum _{i \in \cI} \alpha_i \mQ_i \;\middle|\;
\sum _{i \in \cI} \alpha_i = 1
\text{ and }
\forall\ i \in \cI,\ \mQ_i \in \cQ \land \alpha_i \ge 0  \right\} .
$$

\begin{condition}[Halting or stability]
	\label{condition:conv_halting}
	For all $\mQ \in \mathbf {conv} (\cQ)$,
	$$
	\lim _{m \rightarrow \infty} \mQ ^m = \0_{n \times n}.
	$$
\end{condition}

\begin{condition}[Non-negative coefficients]
	\label{condition:non-negative}
	For all $n_1 + n_2  < k \le n$, we have that $\qq_k \ge 0$ and $b_k \ge 0$.
\end{condition}

\begin{condition}[Sum up to $1$]
	\label{condition:sum-to-1}
	For all $n_1 + n_2  < k \le n$, we have that $\qq_k \T \1 + b_k \le 1$.
\end{condition}

\begin{condition}[Max-only or min-only]
	\label{condition:one-type}
	Either $n_1=0$ or $n_2=0$.
\end{condition}

We will use the following subset notation for the LEMM decision problem under various subsets of the conditions, \eg, ``the LEMM decision problem under $\{\textnormal{C\ref{condition:conv_halting},C\ref{condition:non-negative}}\}$'' means ``the LEMM decision problem under conditions C\ref{condition:conv_halting} and C\ref{condition:non-negative}''.

\begin{remark}[Relevance of the conditions in previous studies]
	\label{remark: Relevance of the conditions in previous studies}
	We clarify the relevance of the conditions. 
	\begin{enumerate}
		\item 
		The classical turn-based version of stochastic games played by two adversarial players with reachability objectives (which is referred to as simple stochastic games or SSGs) has stochastic or probabilistic transitions~\cite{condon1990algorithms}.
		The stochastic transitions lead to conditions~C\ref{condition:non-negative}~and~C\ref{condition:sum-to-1}, and stochastic games require both min and max operators for the two players.
		Hence, SSGs naturally correspond to LEMM with conditions~C\ref{condition:non-negative}~and~C\ref{condition:sum-to-1}.
		
		\item 
		In the context of SSGs (\ie, if conditions~C\ref{condition:non-negative}~and~C\ref{condition:sum-to-1}  hold), then we also obtain condition~C\ref{condition:conv_halting} without loss of generality~\cite{condon1992complexity}.
		
		\item 
		For SSGs, the existence of a polynomial-time algorithm is a major open problem, and we say a problem is SSG-hard if there is a polynomial-time reduction from SSGs. For SSG-hard problems, polynomial-time algorithms require a major breakthrough.
		
		\item 
		The intuitive description of condition C1 is as follows: in the absence of min and max operators, it is similar to Markov chains, and the condition implies that eventually recurrent states are reached almost-surely. This represents reaching the stable distribution almost-surely.
		In the presence of mix and max choices, this represents that irrespective of the choices, almost-surely stability is achieved. This is also referred to as the halting or stopping condition in the literature.
	\end{enumerate}
\end{remark}

We consider the problem of checking the conditions.

\begin{definition}[Condition decision problem]
	The condition decision problem is: given an LEMM and a subset of conditions, determine whether all the conditions are satisfied.
\end{definition}

\begin{remark}\label{remark:condition}
	For the condition decision problem, the conditions \textnormal{C\ref{condition:non-negative}, C\ref{condition:sum-to-1}, and C\ref{condition:one-type}} are easy to check. The main condition decision problem is in the presence of condition \textnormal{C\ref{condition:conv_halting}}.
\end{remark}

\section{Motivating examples}
\label{Section: Motivating examples}

In this section we present motivating
examples that can be modeled in the LEMM framework.

\noindent\textbf{Neural network verification.}
Consider the following \emph{neural network verification} problem.
Multilayer perceptrons (MLP) receive input and forward it by multiple layers of fully connected neurons with activation functions (e.g., ReLU or Maxout)~\cite{goodfellow2013maxout}.
The last layer is interpreted as a decision of the MLP.
For a bounded region of input, we want to decide whether the decision of the MLP is in a bounded region.

Formally, for input $\xx^{\textnormal{in}} \in [0,1]^d$, we are to decide whether the output satisfies  $x^{\textnormal{out}} < \beta$. The above problem can be posed as the following LEMM: $$
\left\{
\begin{aligned}
	\xx^{\textnormal{in}} &= \max \{ 0, \min \{ \xx^{\textnormal{in}}, 1 \} \} \,, \\
	\xx^{\textnormal{Layer }1} &= f ( \mQ^{\textnormal{Layer }1} \xx^{\textnormal{in}} + \bb^{\textnormal{Layer }1} ) \,, \\
	& \vdots \\
	\xx^{\textnormal{Layer }n} &= f ( \mQ^{\textnormal{Layer }n} \xx^{\textnormal{Layer }n-1} + \bb^{\textnormal{Layer }n} )\,, \\
	x^{\textnormal{out}} &= {\qq ^{\textnormal{out}}} \T \ \xx ^{\textnormal{Layer }n} + b^{\textnormal{out}} \,,
\end{aligned}
\right.
$$
where $f$ can be any piecewise linear activation function (\eg, ReLU or Maxout).
This problem is known to be \textsc{NP}-complete \cite{katz2017reluplex}, and corresponds to a LEMM decision problem with no restriction.
\qed

\noindent\textbf{Capital preservation.}
Motivated by the classical portfolio management problem~\cite{markowitz1952PortfolioSelection,puterman2014markov}, the capital preservation problem models the periodic management of assets under uncertainty.
Consider a cyclic evolution with $T$ periods, each period corresponds to an opportunity for the controller to affect the evolution of the assets, and for the market to affect the assets.
In each period, these effects come in turn, so at period $t \in [T]$ the controller can choose between different transformations that convert the assets from the previous stage in expectation in a linear fashion; i.e.,
having $x$ assets, we can get $y$ assets given by
$$
y \gets \max \{ q_{1, i}\, x + b_{1, i} \mid i \in [m] \} \,.
$$
Then, the uncertainty of the market is modeled by one of many possible linear transformations;
i.e., the controller obtains at the end of period $t$ an amount $z$ given by
$$
z \gets \min \{ q_{2, j}\, y + b_{2, j} \mid j \in [\ell] \} \,.
$$
Note that the effect of inflation on the value of the assets can be incorporated in, for example, $q_{2, j}$.
Following this dynamic, the controller chooses how to evolve the amount of assets at each period, and the cycle restarts.

Formally, this problem corresponds to the LEMM
$$
\left\{
\begin{aligned}
	x_{1, 1} &= \max \{ q^1_{1, i}\, x_{T, 2} + b^1_{1, i} \mid i \in [m] \} \,, \\
	x_{1, 2} &= \min \{ q^1_{2, j}\, x_{1, 1} + b^1_{2, j} \mid j \in [\ell] \} \,, \\
	\vdots &  \\
	x_{T, 1} &= \max \{ q^T_{1, i}\, x_{T-1, 2} + b^T_{1, i} \mid i \in [m] \} \,, \\
	x_{T, 2} &= \min \{ q^T_{2, j}\, x_{T, 1} + b^T_{2, j} \mid j \in [\ell] \} \,.
\end{aligned}
\right.
$$

We argue that we have conditions~C\ref{condition:conv_halting}~and~C\ref{condition:non-negative} in the following sense: condition~C\ref{condition:conv_halting} ensures that the controller does not get infinite returns in expectation; and condition~C\ref{condition:non-negative} implies that the assets are non-negative.
\qed

\noindent\textbf{Co-evolution in ecosystem.}
Branching processes~\cite{athreya2004branching} describe the evolution of population where the next state depends only on the previous state.
We consider the following \emph{co-evolution} extension of the problem.
There are $k$ different species in an ecosystem.
Evolution is the combined consequence of internal interactions (\eg, reproduction or predation) and external interventions (\eg, immigration or fertility control).
The population of the next generation is modeled as a piece-wise linear function of the current population.
We want to find the stable distribution of the ecosystem.

Formally, let $\xx \in \R_{\ge 0}^k$ be the current population, and the population of the next generation is given by
$$
\xx^\prime = \max\ \{ \mQ \xx + \bb , 0 \} \,,
$$
in which $\mQ \in \R ^{k \times k}$ models the internal interactions and $\bb \in \R^k$ models the external interventions. Hence the stable distribution of the ecosystem can be posed as the LEMM
$$
\xx = \max\ \{ \mQ \xx + \bb , 0 \} \,.
$$

We argue that we naturally have conditions~C\ref{condition:conv_halting}~and~C\ref{condition:one-type}: condition~C\ref{condition:conv_halting} implies that all species will go extinct without external intervention; and condition~C\ref{condition:one-type} holds because the LEMM is max-only.
\qed

The above examples show that LEMMs with restrictions can model classical optimization problems from the literature and motivate the computational complexity landscapes of various subsets of conditions.

\section{Complexity of LEMM decision problems}

We start with basic results and previous results from the literature, then we present our complexity bounds, and finally the equivalence between some problems.
In what follows, we consider that inputs of the problems are given as rational numbers as usual.

\subsection{Basic and previous results}
\label{subsec:basic results}

\begin{fact}
	\label{basic fact}
	If $X \subseteq Y$ represents two subsets of conditions, then the LEMM decision problem under $X$ is no easier than the LEMM decision problem under $Y$. Hence any complexity lower bound for $Y$ also holds for $X$, and any complexity upper bound for $X$ also holds for $Y$.
	See Figure~\ref{Figure: lattice} for the lattice structure of the LEMM decision problem under different conditions.
\end{fact}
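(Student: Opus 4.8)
The plan is to exhibit a trivial polynomial-time many-one reduction---in fact the identity map---from the $Y$-problem to the $X$-problem, and then read off the two stated consequences from the way complexity classes behave under such reductions. The single observation driving everything is that imposing \emph{fewer} conditions only \emph{enlarges} the admissible input set: since $X \subseteq Y$ as sets of conditions, every LEMM tuple $(n_1,n_2,n,\cJ,\qq,\bb)$ satisfying all conditions in $Y$ automatically satisfies all conditions in $X$. Hence the valid inputs of the $Y$-problem form a subset of the valid inputs of the $X$-problem, while the decision question---does there exist a feasible $\xx$ with $x_i < \beta$---is verbatim the same in both cases. In other words, the $Y$-problem is just the $X$-problem with its input domain restricted to the smaller promise set.

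First I would make the reduction explicit. Given an instance $\big((n_1,n_2,n,\cJ,\qq,\bb),i,\beta\big)$ of the $Y$-problem, the reduction outputs the very same triple, now viewed as an instance of the $X$-problem. This map copies its input, so it runs in linear time, and since the underlying question is unchanged it preserves the answer: the $Y$-instance is a \textsc{Yes}-instance if and only if the corresponding $X$-instance is. Thus the $Y$-problem reduces to the $X$-problem, which is precisely the meaning of ``the $X$-problem is no easier than the $Y$-problem''.

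Next I would derive the two consequences. For membership (upper bounds), the cleanest argument is the domain-restriction view: if the $X$-problem lies in a class $\mathcal{C}$, the same deciding machine, run only on the subclass of $Y$-valid inputs, decides the $Y$-problem, so the $Y$-problem is in $\mathcal{C}$ as well; this needs no appeal to closure of $\mathcal{C}$ under reductions. For hardness (lower bounds), suppose the $Y$-problem is $\mathcal{C}$-hard, witnessed by a polynomial-time reduction $g$ from a $\mathcal{C}$-complete language whose outputs are $Y$-valid instances. Because every $Y$-valid instance is also $X$-valid and the answer is unchanged, the same $g$ is a reduction into the $X$-problem, so the $X$-problem is $\mathcal{C}$-hard. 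Applying this to each pair of adjacent condition sets yields the lattice monotonicity depicted in Figure~\ref{Figure: lattice}, with edges directed from more restrictive to less restrictive condition sets.

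There is essentially no technical obstacle here; this is a bookkeeping fact, and the only point requiring care is the \emph{orientation} of the inclusion. One must keep straight that a \emph{smaller} condition set $X$ corresponds to a \emph{larger}, hence more general, instance space, so that generality---and therefore hardness---is anti-monotone in the set of conditions. I would state this orientation once and let both consequences follow uniformly, so that the fact can be invoked freely throughout \Cref{tab:complexity results} without re-deriving a reduction for each pair of condition sets.
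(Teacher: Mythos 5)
Your argument is correct and matches the paper's (implicit) reasoning: the paper states this as a basic fact without proof, and your identity-map reduction together with the observation that fewer conditions means a larger instance space is exactly the intended justification. Nothing is missing; the careful note on the orientation of the inclusion is the only point of substance, and you handle it correctly.
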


\begin{figure}[h]
	\centering
	\begin{tikzpicture}[]
		% Nodes
		\node at (3, -3) (1234) { \{C\ref{condition:conv_halting},C\ref{condition:non-negative},C\ref{condition:sum-to-1},C\ref{condition:one-type}\} };
		\node at (0, -2) (123) { \{C\ref{condition:conv_halting},C\ref{condition:non-negative},C\ref{condition:sum-to-1}\} };
		\node at (2, -2) (124) { \{C\ref{condition:conv_halting},C\ref{condition:non-negative},C\ref{condition:one-type}\} };
		\node at (4, -2) (134) { \{C\ref{condition:conv_halting},C\ref{condition:sum-to-1},C\ref{condition:one-type}\} };
		\node at (6, -2) (234) { \{C\ref{condition:non-negative},C\ref{condition:sum-to-1},C\ref{condition:one-type}\} };
		\node at (-0.4, -1) (12) { \{C\ref{condition:conv_halting},C\ref{condition:non-negative}\} };
		\node at (0.9, -1) (13) { \{C\ref{condition:conv_halting},C\ref{condition:sum-to-1}\} };
		\node at (2.3, -1) (14) { \{C\ref{condition:conv_halting},C\ref{condition:one-type}\} };
		\node at (3.7, -1) (23) { \{C\ref{condition:non-negative},C\ref{condition:sum-to-1}\} };
		\node at (5.1, -1) (24) { \{C\ref{condition:non-negative},C\ref{condition:one-type}\} };
		\node at (6.4, -1) (34) { \{C\ref{condition:sum-to-1},C\ref{condition:one-type}\} };
		\node at (0, 0) (1) { \{C\ref{condition:conv_halting}\} };
		\node at (2, 0) (2) { \{C\ref{condition:non-negative}\} };
		\node at (4, 0) (3) { \{C\ref{condition:sum-to-1}\} };
		\node at (6, 0) (4) { \{C\ref{condition:one-type}\} };
		\node at (3, 1) (empty) { $\emptyset$ };

		% Edges		
		\draw (1234.north) -- (123.south);
		\draw (1234.north) -- (124.south);
		\draw (1234.north) -- (134.south);
		\draw (1234.north) -- (234.south);
		\draw (123.north) -- (12.south);
		\draw (123.north) -- (13.south);
		\draw (123.north) -- (23.south);
		\draw (124.north) -- (12.south);
		\draw (124.north) -- (14.south);
		\draw (124.north) -- (24.south);
		\draw (134.north) -- (13.south);
		\draw (134.north) -- (14.south);
		\draw (134.north) -- (34.south);
		\draw (234.north) -- (23.south);
		\draw (234.north) -- (24.south);
		\draw (234.north) -- (34.south);
		\draw (12.north) -- (1.south);
		\draw (12.north) -- (2.south);
		\draw (13.north) -- (1.south);
		\draw (13.north) -- (3.south);
		\draw (14.north) -- (1.south);
		\draw (14.north) -- (4.south);
		\draw (23.north) -- (2.south);
		\draw (23.north) -- (3.south);
		\draw (24.north) -- (2.south);
		\draw (24.north) -- (4.south);
		\draw (34.north) -- (3.south);
		\draw (34.north) -- (4.south);
		\draw (1.north) -- (empty.south);
		\draw (2.north) -- (empty.south);
		\draw (3.north) -- (empty.south);
		\draw (4.north) -- (empty.south);
	\end{tikzpicture}
	% Reduce the figure size so that it is slightly narrower than the column. Don't use precise values for figure width.This setup will avoid overfull boxes.
	\caption{Lattice structure of the LEMM decision problem under different conditions.}
	\label{Figure: lattice}
\end{figure}
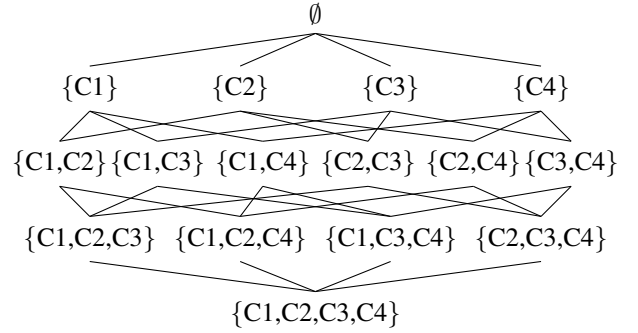

\begin{lemma}
	\label{lemma:np}
	The LEMM decision problem without any restriction is in \textsc{NP}.
\end{lemma}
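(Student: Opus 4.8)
The plan is to use, as a polynomial-size certificate, the \emph{choice} of which argument realizes each $\min$ and each $\max$, and then verify the rest by linear programming. First I would guess a selector $\ell : [n_1+n_2] \to [n]$ with $\ell(j) \in \cJ(j)$ for every min/max index $j$; this has size polynomial in the input. Fixing $\ell$ turns each min/max equation into the single equation $x_j = x_{\ell(j)}$, so the feasibility conditions become the linear system $\xx = \mQ\xx + \bb$ for the corresponding $\mQ \in \cQ$ (equivalently $(\mI - \mQ)\xx = \bb$), together with the inequalities certifying that the selector is correct: $x_j \le x_l$ for all $l \in \cJ(j)$ at each min index, and $x_j \ge x_l$ for all $l \in \cJ(j)$ at each max index. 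Adjoining the target $x_i < \beta$, the verifier must decide feasibility of this system of linear equalities, non-strict linear inequalities, and one strict inequality.

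To discharge the strict inequality I would minimize $x_i$ subject to the equalities and non-strict inequalities above, and accept iff the optimum is strictly below $\beta$ (treating an objective unbounded below as $-\infty < \beta$, and a detected infeasibility as a reject). This is a linear program, solvable in time polynomial in the bit-size of the input, e.g.\ by the ellipsoid method; since its data are rational of polynomial size, its optimal value (when finite) is rational of polynomial bit-size, so the comparison with $\beta$ is exact and the whole verification runs in polynomial time.

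It remains to argue correctness in both directions. If some feasible $\xx^\ast$ has $x_i^\ast < \beta$, then choosing $\ell(j)$ to attain the relevant $\min$ (resp.\ $\max$) at $\xx^\ast$ yields a selector for which $\xx^\ast$ satisfies all the equalities and inequalities above, so the associated program has a point of objective value $x_i^\ast < \beta$ and the verifier accepts. Conversely, if the verifier accepts for some $\ell$, let $\xx$ attain objective value $< \beta$; at a min index $j$ the constraints give $x_j = x_{\ell(j)}$ and $x_j \le x_l$ for all $l \in \cJ(j)$, and since $\ell(j) \in \cJ(j)$ this forces $x_j = \min_{l \in \cJ(j)} x_l$, symmetrically for max indices, while the affine rows hold by construction; hence $\xx$ is a genuine feasible LEMM solution with $x_i < \beta$.

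The main obstacle is not the guessing step but the fact that, without any of the conditions C\ref{condition:conv_halting}--C\ref{condition:one-type}, the matrix $\mI - \mQ$ need not be invertible, so fixing the selector does \emph{not} pin down a unique $\xx$: the equality system may be infeasible or may cut out a positive-dimensional affine subspace. Casting the residual question as linear-programming feasibility (rather than solving a single linear system) is precisely what handles all these cases uniformly, and the standard polynomial bit-bound on LP optima is what both keeps the verifier polynomial and lets us treat the strict inequality $x_i < \beta$ rigorously.
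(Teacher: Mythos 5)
Your proof is correct, but it takes a genuinely different route from the paper. The paper's proof is a one-liner: the certificate is the feasible solution $\xx$ itself, and the verifier simply substitutes it into the equations and checks $x_i < \beta$. Your certificate is instead the combinatorial selector $\ell$, with the verifier running a linear program to decide whether any consistent $\xx$ with $x_i < \beta$ exists. The trade-off is this: the paper's approach has a trivial verifier but silently relies on the fact that whenever a feasible solution with $x_i < \beta$ exists, one exists with polynomial bit-size (so that $\xx$ is a legitimate \textsc{NP} certificate) --- a fact that is true precisely because the solution set for a fixed selector is a rational polyhedron, i.e.\ by the same LP reasoning you make explicit. Your approach guesses only the obviously polynomial-size object (the selector) and pushes all the numerical work into a deterministic polynomial-time LP, which also lets you treat the degenerate cases ($\mI - \mQ$ singular, positive-dimensional solution sets, the strict inequality $x_i < \beta$) cleanly and uniformly. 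Both arguments are sound; yours is more self-contained on the certificate-size issue, at the cost of a longer verification argument.
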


\begin{proof}
	For all LEMM, index $i \in [n]$, and threshold $\beta \in \R$, if there is a feasible solution $\xx$ \st $x_m < \beta$, then $\xx$ is a certificate that can be verified in polynomial time.
\end{proof}

\begin{lemma}
	\label{lemma:linear system existence&uniqueness}
	Let $\mQ \in \R ^{n \times n}$. If $\lim _{m \rightarrow \infty} \mQ^m = \0 _{n \times n}$, then $(\mI - \mQ)$ is invertible. Further, if $\mQ \ge 0$, then $(\mI - \mQ)^{-1} \ge 0$.
\end{lemma}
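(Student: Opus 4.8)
The plan is to handle the two claims in sequence: first establish invertibility of $(\mI - \mQ)$ by ruling out $1$ as an eigenvalue, and then identify $(\mI - \mQ)^{-1}$ with the Neumann series $\sum_{m \ge 0} \mQ^m$, from which nonnegativity can be read off directly.

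For invertibility, I would argue via the null space. If $(\mI - \mQ)$ were singular, there would exist a nonzero $\vv \in \R^n$ with $(\mI - \mQ)\vv = \0$, i.e. $\mQ \vv = \vv$. Iterating gives $\mQ^m \vv = \vv$ for every $m \in \N$. But the hypothesis $\lim_{m \to \infty} \mQ^m = \0_{n \times n}$ forces $\mQ^m \vv \to \0$, hence $\vv = \0$, a contradiction. Therefore $(\mI - \mQ)$ has trivial kernel and is invertible.

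For the second claim I would use the telescoping identity $(\mI - \mQ)\sum_{m=0}^{N} \mQ^m = \mI - \mQ^{N+1}$, valid for every $N$. Multiplying on the left by the inverse obtained above yields $\sum_{m=0}^{N} \mQ^m = (\mI - \mQ)^{-1}(\mI - \mQ^{N+1})$. Letting $N \to \infty$ and using $\mQ^{N+1} \to \0_{n \times n}$, the right-hand side converges to $(\mI - \mQ)^{-1}$; in particular the partial sums converge and $(\mI - \mQ)^{-1} = \sum_{m=0}^{\infty} \mQ^m$. Now if $\mQ \ge 0$ entrywise, then each power $\mQ^m \ge 0$, since a product of entrywise-nonnegative matrices is entrywise nonnegative, so every partial sum $\sum_{m=0}^{N} \mQ^m \ge 0$; passing to the entrywise limit preserves nonnegativity, giving $(\mI - \mQ)^{-1} \ge 0$.

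The only subtle point is justifying convergence of the Neumann series. Rather than invoking spectral-radius theory (the fact that $\lim_{m} \mQ^m = \0$ is equivalent to the spectral radius being strictly less than $1$), I would route the argument entirely through the telescoping identity, which converts the hypothesis $\mQ^{N+1} \to \0$ directly into convergence of the partial sums to $(\mI - \mQ)^{-1}$. This keeps the proof elementary and self-contained, and the nonnegativity conclusion then follows immediately by term-by-term monotonicity of the partial sums.
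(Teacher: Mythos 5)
Your proof is correct and follows essentially the same route as the paper's: invertibility by ruling out $1$ as an eigenvalue of $\mQ$ using $\mQ^m \to \0$, and nonnegativity of $(\mI-\mQ)^{-1}$ by identifying it with the limit of the partial sums $\mI + \mQ + \cdots + \mQ^{N}$, each of which is entrywise nonnegative. If anything, your version is slightly more self-contained, since you iterate the eigenvector relation directly instead of invoking spectral-radius language, and you justify convergence of the partial sums via the telescoping identity rather than asserting the limit exists.
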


\begin{remark}
	\citet{condon1992complexity} proves a version of this lemma under conditions~C\ref{condition:conv_halting},~C\ref{condition:non-negative}~and~C\ref{condition:sum-to-1}, and we observe that the lemma generalizes without any restrictions. {The proof is in the Appendix.}
\end{remark}

\begin{proposition}[Complexity results from the literature]
	\label{thm:ssg complexity}
	The following assertions hold:
	\begin{enumerate}
		\item \label{previous result: C2,C3 is in NP cap coNP} The LEMM decision problem under $\{\textnormal{C\ref{condition:non-negative},C\ref{condition:sum-to-1}}\}$ is in \textsc{UP} $\cap$ \textsc{coUP}.
		\item \label{previous result: C1,C2,C3=C2,C3} The LEMM decision problems under $\{\textnormal{C\ref{condition:conv_halting},C\ref{condition:non-negative},C\ref{condition:sum-to-1}}\}$ and under $\{\textnormal{C\ref{condition:non-negative},C\ref{condition:sum-to-1}}\}$ are polynomially equivalent.
		\item \label{previous result: C2,C3,C4 is in P} The LEMM decision problem under $\{\textnormal{C\ref{condition:non-negative},C\ref{condition:sum-to-1},C\ref{condition:one-type}}\}$ is in \textsc{PTIME}.
	\end{enumerate}
\end{proposition}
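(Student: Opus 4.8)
The plan is to treat each of the three assertions as a correspondence between LEMMs satisfying the stated conditions and a classical stochastic model, and then to invoke the known complexity of that model. The key observation is that conditions~C\ref{condition:non-negative}~and~C\ref{condition:sum-to-1} make each affine line $x_k = \qq_k\T\xx + b_k$ behave exactly like the optimality equation of a randomizing vertex: read $\qq_k \ge 0$ as transition probabilities to the other vertices, $b_k \ge 0$ as the probability of jumping to an absorbing target of value~$1$, and the slack $1 - \qq_k\T\1 - b_k \ge 0$ as the probability of jumping to an absorbing sink of value~$0$. The min lines~\eqref{eq:min line} and max lines~\eqref{eq:max line} then become the two players' vertices, so an LEMM under $\{\textnormal{C\ref{condition:non-negative}}, \textnormal{C\ref{condition:sum-to-1}}\}$ is precisely the system of optimality equations of a simple stochastic game (SSG) with a reachability objective.

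For assertion~\ref{previous result: C2,C3 is in NP cap coNP}, I would make this correspondence explicit and then cite that deciding whether the value of an SSG is below a rational threshold is in \textsc{UP} $\cap$ \textsc{coUP}~\cite{condon1992complexity,chatterjee2011reduction}: the unambiguous certificate is an optimal positional strategy for one player, which pins down a single linear system and hence a single value vector. The gap to close is that the LEMM decision problem asks for the \emph{existence} of a feasible $\xx$ with $x_i < \beta$, not for the game value; these coincide once the feasible solution is known to be unique. I would obtain uniqueness from assertion~\ref{previous result: C1,C2,C3=C2,C3} together with Lemma~\ref{lemma:linear system existence&uniqueness}: in the halting case each fixed choice of min/max successors yields an invertible $(\mI-\mQ)$, so the optimality equations have the SSG value as their only solution.

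Assertion~\ref{previous result: C1,C2,C3=C2,C3} is a polynomial equivalence. One direction is immediate from Basic Fact~\ref{basic fact}, since $\{\textnormal{C\ref{condition:non-negative}}, \textnormal{C\ref{condition:sum-to-1}}\} \subseteq \{\textnormal{C\ref{condition:conv_halting}}, \textnormal{C\ref{condition:non-negative}}, \textnormal{C\ref{condition:sum-to-1}}\}$ transfers the upper bound. For the reverse reduction I would use Condon's stopping-game transformation: perturb every randomizing vertex so that it moves to the $0$-sink with a small extra probability $\varepsilon$, which makes every $\mQ \in \mathbf{conv}(\cQ)$ a strict sub-stochastic matrix and hence yields condition~C\ref{condition:conv_halting}. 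The heart of this step, and where I expect the real work, is the quantitative argument that a single, polynomially-sized $\varepsilon$ perturbs all vertex values by less than the minimal gap between distinct achievable values, so that the answer for the original threshold $\beta$ can be recovered from the halting instance, exactly as in~\citet{condon1992complexity}.

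Finally, assertion~\ref{previous result: C2,C3,C4 is in P} adds condition~C\ref{condition:one-type}, so only one of the two players is present and the correspondence above produces a Markov decision process with a reachability objective. Its optimal values are the solution of the standard MDP linear program, computable in polynomial time~\cite{puterman2014markov}; comparing the relevant coordinate to $\beta$ then settles the decision problem. Thus the main obstacles throughout are not the modelling but (i)~bridging existence-of-a-feasible-point with the value via uniqueness in part~\ref{previous result: C2,C3 is in NP cap coNP}, and (ii)~the error control in the stopping-game perturbation in part~\ref{previous result: C1,C2,C3=C2,C3}.
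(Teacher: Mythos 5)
Your proposal is correct and follows essentially the same route as the paper, which treats this proposition as a summary of known results and gives only a citation-level explanation (Condon 1992 and Chatterjee et al.\ 2011 for items~1 and~2, and the MDP linear program for item~3). You in fact supply more detail than the paper does, in particular by making the LEMM-to-SSG correspondence explicit and by flagging the two genuine technical points --- bridging existence of a feasible solution with the game value via uniqueness, and the error control in the stopping-game perturbation --- that the paper leaves implicit in its citations.
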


\noindent{\em Explanation.}
Items~\ref{previous result: C2,C3 is in NP cap coNP} and~\ref{previous result: C1,C2,C3=C2,C3} follow mainly from~\cite{condon1992complexity} and \cite{chatterjee2011reduction}.
Item~\ref{previous result: C2,C3,C4 is in P} follows from single-player stochastic games being MDPs, which can be solved via linear programming.

\subsection{Complexity lower bounds}
\label{subsec:lb}

In this section, we present the complexity lower bounds.
First, we recall the classic \textsc{NP}-complete partition problem.

\begin{lemma}[\citet{garey1979computers}]
	The partition problem is: given a set of positive integers $\{ a_1, \ldots, a_{k} \} \in \mathbb Z _{\ge 1} ^{k}$, determine whether the set can be partitioned into two subsets with equal sums.
	This problem is \textsc{NP}-complete.
\end{lemma}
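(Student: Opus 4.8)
The plan is to establish the two halves of \textsc{NP}-completeness separately. Membership in \textsc{NP} is immediate: given an instance $\{a_1, \ldots, a_k\}$, a certificate is a subset $A \subseteq [k]$, and a verifier checks in polynomial time that $\sum_{i \in A} a_i = \frac{1}{2}\sum_{i=1}^{k} a_i$; if the total sum is odd then no equal partition can exist, so we reject at once. Since each $a_i$ has polynomial-size binary encoding, both the guessed subset and the arithmetic are polynomially bounded, so the problem is in \textsc{NP}.

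For hardness, I would reduce from the \textsc{Subset Sum} problem, which is a classical \textsc{NP}-complete problem (one of Karp's original list): given integers $a_1, \ldots, a_n$ and a target $t$, decide whether some subset sums to $t$. Write $S = \sum_{i=1}^n a_i$ and assume $0 \le t \le S$, since the remaining cases are trivially decidable. From such an instance I build a Partition instance by appending two new elements, $S + t$ and $2S - t$, obtaining the multiset $\{a_1, \ldots, a_n, S+t, 2S-t\}$. Its total is $4S$, so any equal partition must place $2S$ on each side. Because $(S+t) + (2S-t) = 3S > 2S$, the two appended elements cannot lie on the same side; hence one side holds $2S - t$ and the other holds $S + t$. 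Completing the side containing $2S - t$ up to the required value $2S$ forces the original elements on that side to sum to exactly $t$; conversely, any subset of the $a_i$ summing to $t$ yields a valid equal partition. Thus the Partition instance is a \textsc{Yes} instance if and only if the \textsc{Subset Sum} instance is, and the construction is computable in polynomial time, giving \textsc{NP}-hardness.

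Two points need care. The statement demands strictly positive integers, so I would verify $S + t \ge 1$ and $2S - t \ge 1$; both hold since $S \ge 1$ and $0 \le t \le S$ (indeed $2S - t \ge S \ge 1$). I would also confirm the appended numbers have polynomial bit-length, which they do as they are bounded by $2S$. The only genuinely delicate step, and the one I would present most carefully, is the forcing argument that the two large elements must be separated: it relies on the specific choice $S+t$ and $2S-t$, whose pairwise sum exceeds the half-total $2S$, and this is precisely what makes the equal-partition constraint encode the \textsc{Subset Sum} target. Everything else is routine bookkeeping.
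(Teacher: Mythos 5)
Your proof is correct, but note that the paper does not prove this lemma at all: it is stated as a citation to \citet{garey1979computers}, and the partition problem's \textsc{NP}-completeness is simply imported as a classical result. What you have written is the standard textbook argument: membership in \textsc{NP} via the subset certificate, and hardness via the reduction from \textsc{Subset Sum} that appends the two elements $S+t$ and $2S-t$ so that the half-total becomes $2S$ and the pairwise sum $3S > 2S$ forces the two padding elements onto opposite sides. The forcing step, the recovery of a subset summing to $t$, the positivity of the appended elements, and the polynomial bit-length are all handled correctly; the only implicit assumptions are the degenerate ones ($n \ge 1$ so that $S \ge 1$, and the convention that the instance is a tuple in $\mathbb{Z}_{\ge 1}^k$ so that coincidences such as $S+t = 2S-t$ are harmless), both of which are trivially dispatched. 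The trade-off between the two routes is the obvious one: the citation keeps the paper short and defers to a canonical reference, while your argument is self-contained modulo the \textsc{NP}-completeness of \textsc{Subset Sum}, which is itself a cited Karp result; neither choice affects anything downstream, since the paper only uses the lemma as the source problem for the reduction in Lemma~\ref{lemma:hardness of non-negative}.
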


\begin{lemma}
	\label{lemma:hardness of non-negative}
	The LEMM decision problem under $\{\textnormal{C\ref{condition:non-negative},C\ref{condition:one-type}}\}$ is \textsc{NP}-hard.
\end{lemma}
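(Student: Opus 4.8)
The plan is to reduce from the partition problem, which was just shown to be \textsc{NP}-complete. Given positive integers $\{a_1,\dots,a_k\}$ with total $A=\sum_i a_i$, I would construct in polynomial time a LEMM that is max-only (so $n_1=0$, giving condition~C\ref{condition:one-type}) and has only non-negative coefficients and constants (condition~C\ref{condition:non-negative}), together with a designated index $i^\star$ and a threshold $\beta$, so that the LEMM admits a feasible solution with $x_{i^\star}<\beta$ exactly when $\{a_1,\dots,a_k\}$ can be split into two parts of equal sum $A/2$. Since a feasible $\xx$ is a polynomial-size certificate, membership in \textsc{NP} is already given by \cref{lemma:np}, so only hardness needs this construction.

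The central device is a gadget, one per element $a_i$, that forces a binary decision ``$i\in S$'' versus ``$i\notin S$''. Each gadget uses a $\max$ equation whose two branches correspond to the two choices, wired through a cyclic dependency so that at a feasible solution the tight branch records the decision; affine equations with non-negative coefficients then route the weight $a_i$ into exactly one of two accumulators, producing the two complementary totals $\sigma\coloneqq\sum_{i\in S}a_i$ and $\bar\sigma\coloneqq\sum_{i\notin S}a_i$ (so that $\sigma+\bar\sigma=A$ for any integral decision). A final $\max$ equation sets $x_{i^\star}=\max\{\sigma,\bar\sigma\}$, which for any integral decision satisfies $x_{i^\star}\ge A/2$ with equality precisely when $\sigma=\bar\sigma=A/2$. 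Taking $\beta=A/2+\tfrac12$ and using that each $\sigma$ is an integer, a feasible solution satisfies $x_{i^\star}<\beta$ if and only if $x_{i^\star}=A/2$, i.e. the recorded decision is a balanced partition. Completeness (every partition yields such a feasible solution) follows by substituting the assignment into the equations, and conditions~C\ref{condition:non-negative} and~C\ref{condition:one-type} hold by construction; note in particular that I never form $A-\sigma$ (which would need a negative coefficient), but accumulate $\bar\sigma$ directly.

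I expect the main obstacle to be the design and soundness analysis of the binary gadget. Condition~C\ref{condition:non-negative} makes every affine map monotone and non-negative, so the whole system is an order-preserving map; the textbook bistable construction $z=\max\{0,\,2z-a_i\}$ is unavailable because it requires a negative constant, and in fact no single variable governed by a $\max$ of non-negative affine pieces can have two isolated positive fixed points, since a slope-${>}1$ piece with non-negative intercept stays strictly above the diagonal for all nonnegative arguments. The discreteness needed to exclude fractional ``half-in'' assignments must therefore be manufactured through the interaction of several variables and the cyclic $\max$ dependencies, exploiting that a monotone system \emph{without} the halting condition~C\ref{condition:conv_halting} may possess many incomparable feasible solutions. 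The crux will be to prove the soundness direction: that after imposing $x_{i^\star}<\beta$, the only surviving feasible solutions are the integral ones, so that they are in bijection with subsets $S$ of total $A/2$. Establishing this, rather than the routine completeness direction or the verification of C\ref{condition:non-negative} and C\ref{condition:one-type}, is where the work lies.
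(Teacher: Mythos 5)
There is a genuine gap: the central bistable gadget is never constructed. You correctly observe that under condition~C\ref{condition:non-negative} a single max equation over non-negative affine branches cannot have two isolated fixed points (an expansive branch $ax+b$ with $a>1$, $b\ge 0$ has its fixed point at $b/(1-a)\le 0$, where the max would instead select a branch with non-negative value), and you then defer the entire soundness argument to an unspecified multi-variable cyclic construction. That deferred construction is the whole content of the lemma, so as written the proposal does not constitute a proof. The difficulty is self-inflicted by insisting on the max operator: condition~C\ref{condition:one-type} is satisfied by \emph{either} $n_1=0$ or $n_2=0$, and condition~C\ref{condition:non-negative} constrains only the coefficients $\qq_k$ and constants $b_k$, not the values the variables take. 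The paper uses a \emph{min}-only system, where the obstruction you identified disappears: the equation $x_i=\min\{2x_i+1,\,1\}$ (implemented with an auxiliary affine variable $x_{i+n_1}=2x_{i}+1$ and a constant variable equal to $1$, all coefficients non-negative) has solution set exactly $\{-1,+1\}$, because the expansive branch's fixed point $-1$ is now consistent with the min (for $x\le 0$ one has $2x+1\le 1$, and $x=2x+1$ forces $x=-1$; for $x\ge 0$ the min is the constant $1$). Negative \emph{values} are perfectly compatible with non-negative \emph{coefficients}, which is the key realization your proposal is missing.

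The second missing device is how the balance condition is imposed. Your accumulator-plus-threshold scheme ($x_{i^\star}=\max\{\sigma,\bar\sigma\}$ and $\beta=A/2+\tfrac12$) would require you to first rule out fractional assignments, which is exactly the part you leave open. The paper instead encodes the constraint $\sum_i a_i x_i=0$ directly via the self-referential affine equation $x_{2n_1+2}=a_1x_1+\cdots+a_{n_1}x_{n_1}+x_{2n_1+2}$, which has non-negative coefficients and is feasible if and only if $\sum_i a_ix_i=0$; combined with $x_i\in\{-1,+1\}$ this is precisely the partition condition, and no integrality or threshold argument is needed. So while your choice of source problem (partition) matches the paper, the reduction itself is not carried out, and the route you sketch runs into an obstacle that you yourself demonstrate but do not overcome.
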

\begin{proof}[Main proof idea]
	We show a reduction from the partition problem of $\{ a_1, \ldots, a_{m} \}$ to an LEMM as follows.
	First, we construct a min-only LEMM with non-negative coefficients, in which the min variables can only take value $\pm 1$.
	Then, we associate each integer $a_i$ with a min variable $x_i$.
	When $x_i$ takes value $+1$, we put $a_i$ into the first subset; and when it takes value $-1$, we put $a_i$ into the second subset.
	Finally, we use the last affine variable $x_{2n_1+2}$ to encode the constraint that the sum of the two subsets is equal.
\end{proof}

\begin{lemma}
	\label{lemma:hardness of sum-to-one}
	The LEMM decision problem under $\{\textnormal{C\ref{condition:sum-to-1},C\ref{condition:one-type}}\}$ is \textsc{NP}-hard.
\end{lemma}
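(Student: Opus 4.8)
The plan is to mirror the reduction from the partition problem used in \cref{lemma:hardness of non-negative}, but to replace its ``doubling'' selection gadget (which relies on the non-negative coefficients condition C\ref{condition:non-negative}) by one compatible with the sum-to-$1$ condition C\ref{condition:one-type} \emph{and} C\ref{condition:sum-to-1}. Given a partition instance $\{a_1,\dots,a_m\}$ with $S=\sum_i a_i$, I would build a \emph{max-only} LEMM (so C\ref{condition:one-type} holds with $n_1=0$) whose feasible solutions are in bijection with subsets $A\subseteq[m]$, and then phrase the threshold query so that a feasible solution below the threshold exists exactly when some $A$ satisfies $\sum_{i\in A}a_i=S/2$.

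The crux is the selection gadget. For each $i$ I would introduce a max variable governed by the single equation
$$ x_i=\max\{\,0,\ 2x_i-1\,\}, $$
realized by one affine variable equal to the constant $0$ and one equal to $2x_i-1$. A short case split on the sign of $2x_i-1$ shows the feasible values of $x_i$ are exactly $\{0,1\}$. Both affine rows satisfy C\ref{condition:sum-to-1}: the constant row has row-sum $0\le 1$, and the row $2x_i-1$ has row-sum $2+(-1)=1\le 1$. This is precisely where C\ref{condition:sum-to-1} is exploited instead of C\ref{condition:non-negative}: C\ref{condition:sum-to-1} bounds only the \emph{signed} row-sum, so a large positive coefficient can be paired with a compensating negative constant, an option unavailable to the non-negative gadget of \cref{lemma:hardness of non-negative}. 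Since the gadgets are decoupled across $i$, the feasible set of the whole system is $\{0,1\}^m$, with $x_i=1$ read as $i\in A$.

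To read off the partition condition I would add a scaled sum $F=\frac{1}{S}\sum_i a_i x_i$ (row-sum $\sum_i a_i/S=1$, so C\ref{condition:sum-to-1} holds) and a V-shaped penalty $G=\max\{\,F-\tfrac12,\ \tfrac12-F\,\}$, whose two affine arguments have row-sums $\tfrac12$ and $-\tfrac12$. Then $G=\tfrac{1}{S}\big|\sum_{i\in A}a_i-\tfrac{S}{2}\big|\ge 0$, with $G=0$ iff $\sum_{i\in A}a_i=S/2$. As $|\sum_{i\in A}a_i-\tfrac S2|$ is a multiple of $\tfrac12$, every nonzero value of $G$ is at least $\tfrac{1}{2S}$, so taking the query index to be the variable $G$ and the threshold $\beta=\tfrac{1}{2S}$ makes ``$\exists$ feasible $\xx$ with $G<\beta$'' equivalent to the partition instance being a yes-instance. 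The reduction is manifestly polynomial, the system is max-only and obeys C\ref{condition:sum-to-1}, giving \textsc{NP}-hardness. The main obstacle is exactly this gadget design: under C\ref{condition:sum-to-1} affine self-maps are non-expansive in the row-sum sense and typically admit a unique fixed point or a continuum, so manufacturing a variable with \emph{exactly two} feasible values requires the offset-doubling trick above; the remaining steps are routine verification of C\ref{condition:sum-to-1} and of the discreteness gap.
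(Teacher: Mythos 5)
Your reduction is correct, but it takes a genuinely different route from the paper. The paper does not build a fresh reduction from the partition problem at all: it first notes that \cref{lemma:hardness of non-negative} together with \cref{basic fact} already gives \textsc{NP}-hardness under $\{\textnormal{C\ref{condition:one-type}}\}$ alone, and then gives a \emph{generic} polynomial transformation from any LEMM under $\{\textnormal{C\ref{condition:one-type}}\}$ to an equivalent one under $\{\textnormal{C\ref{condition:sum-to-1}},\textnormal{C\ref{condition:one-type}}\}$: append a dummy variable $x_{n+1}=0$ and give each affine row an extra coefficient $-\qq_k\T\1-b_k$ on that dummy variable, so every row sum plus offset becomes $0\le 1$ while the solution set is unchanged. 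Your approach instead redesigns the Boolean gadget from scratch so that it natively satisfies C\ref{condition:sum-to-1}: the fixed-point equation $x_i=\max\{0,\,2x_i-1\}$ with row sums $0$ and $2-1=1$ correctly pins $x_i$ to $\{0,1\}$, and your V-shaped penalty $G=\max\{F-\tfrac12,\tfrac12-F\}$ with threshold $\tfrac{1}{2S}$ cleverly replaces the paper's self-referential equation $x_{2n_1+2}=\sum_i a_i x_i+x_{2n_1+2}$ (which could not survive under C\ref{condition:sum-to-1}, since its row sum exceeds $1$); the discreteness gap of $\tfrac{1}{2S}$ is verified correctly. What the paper's route buys is modularity and brevity --- the zero-padding trick shows that C\ref{condition:sum-to-1} is essentially free to impose on any instance, which is reused implicitly elsewhere (e.g., in \cref{lemma:equivalence c1=c134}); what your route buys is a self-contained hardness proof that exhibits concretely how C\ref{condition:sum-to-1}'s signed row-sum bound, unlike C\ref{condition:non-negative}, tolerates large coefficients offset by negative constants. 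Both are valid; yours is just more work than necessary given \cref{basic fact}.
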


\begin{proof}[Main proof idea]
	From \cref{lemma:hardness of non-negative} and \cref{basic fact}, we know that the LEMM decision problem under $\{\textnormal{C\ref{condition:one-type}}\}$ is \textsc{NP}-hard. Then, we show that from the LEMM under $\{\textnormal{C\ref{condition:one-type}}\}$, we can construct another equivalent LEMM that satisfies conditions~C\ref{condition:sum-to-1}~and~C\ref{condition:one-type}.
\end{proof}

We conclude this section with a theorem.

\begin{theorem}
	\label{thm:np complete results}
	The LEMM decision problems under conditions $\{ \textnormal{C\ref{condition:non-negative},C\ref{condition:one-type}} \}$, $\{ \textnormal{C\ref{condition:sum-to-1},C\ref{condition:one-type}} \}$, $\{ \textnormal{C\ref{condition:non-negative}}\}$, $\{ \textnormal{C\ref{condition:sum-to-1}} \}$, $\{ \textnormal{C\ref{condition:one-type}} \}$, or  $\emptyset$ are all \textsc{NP}-complete.
	Hence, we establish the first row of \cref{tab:complexity results}.
\end{theorem}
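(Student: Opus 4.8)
The plan is to establish \textsc{NP}-completeness for each of the six listed condition sets by separately arguing membership in \textsc{NP} and \textsc{NP}-hardness, and then invoking the monotonicity of the conditions (\cref{basic fact}) to propagate both bounds across the lattice in \cref{Figure: lattice}. The genuine technical work has already been carried out in the two hardness lemmas, so the theorem is essentially an assembly of existing pieces.

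For membership, I would appeal to \cref{lemma:np}, which places the unrestricted LEMM decision problem (the empty set of conditions) in \textsc{NP}. Since $\emptyset \subseteq Y$ for every subset $Y$ of conditions, \cref{basic fact} transfers this \textsc{NP} upper bound to every $Y$; in particular each of the six sets $\{\textnormal{C\ref{condition:non-negative},C\ref{condition:one-type}}\}$, $\{\textnormal{C\ref{condition:sum-to-1},C\ref{condition:one-type}}\}$, $\{\textnormal{C\ref{condition:non-negative}}\}$, $\{\textnormal{C\ref{condition:sum-to-1}}\}$, $\{\textnormal{C\ref{condition:one-type}}\}$, and $\emptyset$ inherits membership in \textsc{NP}. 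This half is immediate and requires no additional argument.

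For hardness, the two base cases are already in hand: \cref{lemma:hardness of non-negative} gives \textsc{NP}-hardness of $\{\textnormal{C\ref{condition:non-negative},C\ref{condition:one-type}}\}$, and \cref{lemma:hardness of sum-to-one} gives \textsc{NP}-hardness of $\{\textnormal{C\ref{condition:sum-to-1},C\ref{condition:one-type}}\}$. I would then propagate these lower bounds downward in the lattice. Because $\{\textnormal{C\ref{condition:non-negative}}\}$, $\{\textnormal{C\ref{condition:one-type}}\}$, and $\emptyset$ are all subsets of $\{\textnormal{C\ref{condition:non-negative},C\ref{condition:one-type}}\}$, \cref{basic fact} carries the \textsc{NP}-hardness lower bound to each of them; likewise $\{\textnormal{C\ref{condition:sum-to-1}}\}$ (and once more $\{\textnormal{C\ref{condition:one-type}}\}$ and $\emptyset$) are subsets of $\{\textnormal{C\ref{condition:sum-to-1},C\ref{condition:one-type}}\}$, so hardness transfers there as well. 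Thus all six sets are \textsc{NP}-hard, and combined with the \textsc{NP}-membership above this yields \textsc{NP}-completeness.

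The only point to track carefully is the direction of the monotonicity in \cref{basic fact}: dropping a condition makes the problem strictly more general and hence no easier, so lower bounds flow from more restricted sets to less restricted ones, whereas upper bounds flow in the opposite direction. Once this bookkeeping is pinned down there is no real obstacle left, since the substantive reductions live entirely inside \cref{lemma:hardness of non-negative,lemma:hardness of sum-to-one}; the present statement simply closes those two families downward to the empty condition set.
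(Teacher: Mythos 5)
Your proposal is correct and matches the paper's proof exactly: \textsc{NP}-membership comes from \cref{lemma:np} propagated via \cref{basic fact}, and \textsc{NP}-hardness comes from \cref{lemma:hardness of non-negative} and \cref{lemma:hardness of sum-to-one} propagated downward through the lattice by the same fact. Your careful note on the direction of the monotonicity is the right bookkeeping and is consistent with how the paper applies it.
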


\begin{proof}
	The \textsc{NP}-upper bound follows from Lemma~\ref{lemma:np} and \cref{basic fact}.
	The \textsc{NP}-hardness follows from Lemma~\ref{lemma:hardness of non-negative}, Lemma~\ref{lemma:hardness of sum-to-one}, and \cref{basic fact}.
\end{proof}

\subsection{Complexity upper bounds}
\label{subsec:ub}

In this section, we will establish the following main results.

\begin{theorem}
	\label{thm:halting is NP intersects co-NP}
	The LEMM decision problem under $\{\textnormal{C\ref{condition:conv_halting}}\}$ is in \textsc{UP} $\cap$ \textsc{coUP}.
\end{theorem}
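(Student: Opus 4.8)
The plan is to use the unique feasible solution itself as the certificate, exploiting the existence and uniqueness guaranteed by Condition~C\ref{condition:conv_halting}. By \cref{lemma:existence&uniqueness}, whenever Condition~C\ref{condition:conv_halting} holds the LEMM has exactly one feasible solution $\xx^\ast$. Since the feasible solution is unique, the existential decision problem ``does there exist a feasible $\xx$ with $x_i < \beta$'' collapses to the single question ``does $\xx^\ast$ satisfy $x_i^\ast < \beta$''. This is precisely the structure that yields both a \textsc{UP} and a \textsc{coUP} certificate: the only candidate certificate is $\xx^\ast$, and it should be accepted on the yes-side exactly when $x_i^\ast < \beta$ and on the no-side exactly when $x_i^\ast \ge \beta$.

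First I would argue that $\xx^\ast$ admits a representation with polynomially many bits. Given the unique feasible $\xx^\ast$, for every min variable pick an index of $\cJ(i)$ attaining the minimum and for every max variable an index of $\cJ(j)$ attaining the maximum; this selection $\sigma$ defines a matrix $\mQ_\sigma \in \cQ$ for which $\xx^\ast = \mQ_\sigma \xx^\ast + \bb$. Since $\mQ_\sigma \in \cQ \subseteq \mathbf{conv}(\cQ)$, Condition~C\ref{condition:conv_halting} gives $\lim_{m\to\infty}\mQ_\sigma^m = \0_{n\times n}$, so by \cref{lemma:linear system existence&uniqueness} the matrix $(\mI - \mQ_\sigma)$ is invertible and $\xx^\ast = (\mI - \mQ_\sigma)^{-1}\bb$. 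By Cramer's rule the entries of $(\mI - \mQ_\sigma)^{-1}\bb$ are ratios of determinants of (sub)matrices of $(\mI - \mQ_\sigma)$ and $\bb$, which are rational matrices of polynomial bit-length; hence $\xx^\ast$ is a rational vector of polynomial bit-length and can be guessed by a nondeterministic polynomial-time machine.

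The verifier, on a guessed rational vector $\xx$, checks feasibility directly: for each min equation it checks $x_i \le x_l$ for all $l \in \cJ(i)$ with equality for some such $l$; for each max equation the symmetric condition; and for each affine equation it checks $x_k = \qq_k\T \xx + b_k$ exactly. All checks use rational arithmetic and run in polynomial time. By uniqueness of the feasible solution, an arbitrary $\xx$ passes all feasibility checks iff $\xx = \xx^\ast$, so at most one guessed vector is ever feasible. The \textsc{UP} machine additionally requires $x_i < \beta$ before accepting, and the \textsc{coUP} machine additionally requires $x_i \ge \beta$. In either case there is at most one accepting computation, namely the one guessing $\xx^\ast$ and only when the threshold test matches, which is exactly the unambiguity requirement of \textsc{UP} and \textsc{coUP}.

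I expect the main obstacle to be the uniqueness of the feasible solution under Condition~C\ref{condition:conv_halting}, that is, \cref{lemma:existence&uniqueness}. This is where the classical SSG arguments fail, since monotonicity and the $\min\max = \max\min$ property break down as the paper's examples show, so uniqueness cannot be inherited from SSG theory and must be argued directly from the joint contraction-type property that Condition~C\ref{condition:conv_halting} imposes over the entire convex hull $\mathbf{conv}(\cQ)$. Given that lemma, the remaining steps, namely the bit-length bound on $\xx^\ast$ and the polynomial-time feasibility check, are routine.
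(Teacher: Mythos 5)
Your proposal is correct and follows essentially the same route as the paper: both invoke \cref{lemma:existence&uniqueness} to obtain a unique feasible solution and use it as the single certificate for both the \textsc{Yes} and \textsc{No} sides. The paper's proof is a one-line appeal to that lemma, whereas you additionally spell out the polynomial bit-length bound via $(\mI-\mQ_\sigma)^{-1}\bb$ and Cramer's rule and the polynomial-time feasibility check, which are the routine details the paper leaves implicit.
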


\begin{corollary}
	\label{cor:up cap coup results}
	The \textsc{UP} $\cap$ \textsc{coUP} complexity results in the second, third, and fourth row of \cref{tab:complexity results} are established.
\end{corollary}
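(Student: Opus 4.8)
The plan is to obtain this corollary as a purely structural consequence of the already-established \cref{thm:halting is NP intersects co-NP}, so that no new combinatorial work is required. First I would read off from \cref{tab:complexity results} the exact condition sets appearing in the second, third, and fourth rows: the second row contains $\{\textnormal{C\ref{condition:conv_halting},C\ref{condition:sum-to-1},C\ref{condition:one-type}}\}$, $\{\textnormal{C\ref{condition:conv_halting},C\ref{condition:sum-to-1}}\}$, $\{\textnormal{C\ref{condition:conv_halting},C\ref{condition:one-type}}\}$, and $\{\textnormal{C\ref{condition:conv_halting}}\}$; the third row contains $\{\textnormal{C\ref{condition:conv_halting},C\ref{condition:non-negative}}\}$; and the fourth row contains $\{\textnormal{C\ref{condition:conv_halting},C\ref{condition:non-negative},C\ref{condition:sum-to-1}}\}$ together with $\{\textnormal{C\ref{condition:non-negative},C\ref{condition:sum-to-1}}\}$. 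The decisive observation is that every one of these sets \emph{except} $\{\textnormal{C\ref{condition:non-negative},C\ref{condition:sum-to-1}}\}$ contains condition~C\ref{condition:conv_halting}, so two ingredients suffice: the single-condition membership from \cref{thm:halting is NP intersects co-NP}, and the classical simple-stochastic-games bound for the one remaining set.

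The main step is the monotonicity transfer along the lattice of \cref{Figure: lattice}. \cref{thm:halting is NP intersects co-NP} places the LEMM decision problem under the singleton $\{\textnormal{C\ref{condition:conv_halting}}\}$ in \textsc{UP} $\cap$ \textsc{coUP}. Since, viewed as sets of conditions, $\{\textnormal{C\ref{condition:conv_halting}}\}$ is contained in each of the C\ref{condition:conv_halting}-containing sets above, \cref{basic fact} lets the \textsc{UP} $\cap$ \textsc{coUP} upper bound propagate from the more general problem to every more restrictive one. This already covers the entire second row, the third row, and the entry $\{\textnormal{C\ref{condition:conv_halting},C\ref{condition:non-negative},C\ref{condition:sum-to-1}}\}$ of the fourth row. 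For the only leftover set, $\{\textnormal{C\ref{condition:non-negative},C\ref{condition:sum-to-1}}\}$, which lacks C\ref{condition:conv_halting}, I would invoke the first assertion of \cref{thm:ssg complexity}, which states precisely that this problem is in \textsc{UP} $\cap$ \textsc{coUP}. Combining the two sources yields all the \textsc{UP} $\cap$ \textsc{coUP} memberships displayed in rows two through four.

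The one point that must be stated carefully is the orientation of \cref{basic fact}: adding conditions only shrinks the family of admissible instances, so the set with \emph{fewer} conditions is the harder, more general problem, and it is therefore its \emph{upper} bound (not its lower bound) that transfers upward to the richer condition sets; flagging this explicitly avoids the easy sign error. Apart from this bookkeeping, the corollary presents no genuine obstacle, because all of the real difficulty has been front-loaded into \cref{thm:halting is NP intersects co-NP}, whose proof must establish a unique feasible solution under C\ref{condition:conv_halting} despite the failure of monotonicity and of the $\min\max=\max\min$ identity in this regime, and then build a single verifier certificate — a branch-selecting choice function with a canonical tie-break — accepted for exactly one of the two answers. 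Granting that theorem, the corollary is a straightforward traversal of the lattice.
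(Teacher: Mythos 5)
Your proposal is correct and follows essentially the same route as the paper: the second and third rows are obtained by propagating the \textsc{UP} $\cap$ \textsc{coUP} upper bound from Theorem~\ref{thm:halting is NP intersects co-NP} via \cref{basic fact}, and the set $\{\textnormal{C\ref{condition:non-negative},C\ref{condition:sum-to-1}}\}$ is handled by Proposition~\ref{thm:ssg complexity}. The only cosmetic difference is that the paper covers $\{\textnormal{C\ref{condition:conv_halting},C\ref{condition:non-negative},C\ref{condition:sum-to-1}}\}$ via the polynomial equivalence in item~\ref{previous result: C1,C2,C3=C2,C3} of Proposition~\ref{thm:ssg complexity}, whereas you cover it directly from Theorem~\ref{thm:halting is NP intersects co-NP} and \cref{basic fact}; both are valid.
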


\noindent{\em Explanation.} \Cref{thm:ssg complexity} (item~\ref{previous result: C2,C3 is in NP cap coNP} and item~\ref{previous result: C1,C2,C3=C2,C3}) establishes the results of the fourth row of \cref{tab:complexity results}. Theorem~\ref{thm:halting is NP intersects co-NP} and \cref{basic fact} establish the results for the second and third row.

\begin{theorem}
	\label{thm:1,2,4 is in P}
	The LEMM decision problem under $\{\textnormal{C\ref{condition:conv_halting},C\ref{condition:non-negative},C\ref{condition:one-type}}\}$ is in \textsc{PTIME}.
\end{theorem}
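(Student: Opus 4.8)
The plan is to reduce the decision problem to computing the unique solution, and then to characterize that solution as the optimum of a polynomial-size linear program. Under condition~C\ref{condition:conv_halting} the LEMM has a unique feasible solution $\xx^*$ (the existence and uniqueness of the solution under C\ref{condition:conv_halting} is established earlier). Hence the decision problem is equivalent to computing $\xx^*$ and testing $x_i^* < \beta$, so it suffices to compute $\xx^*$ in polynomial time. By condition~C\ref{condition:one-type} I assume without loss of generality the max-only case $n_1 = 0$ (the min-only case is symmetric, with reversed inequalities and a maximization objective). I would then consider the linear program
\[
\min\ \1\T\xx \quad\text{s.t.}\quad x_j \ge x_l\ \ (j\in[n_2],\, l\in\cJ(j)),\qquad x_k=\qq_k\T\xx+b_k\ \ (n_2<k\le n),
\]
which relaxes each equation $x_j = \max_{l\in\cJ(j)} x_l$ to the inequalities $x_j \ge x_l$ while keeping the affine equalities; it has polynomially many variables and constraints.

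The core step is the claim that $\xx^*$ is the coordinate-wise least feasible point of this LP, and therefore its unique optimum. Feasibility of $\xx^*$ is immediate, since $x_j^* = \max_{l} x_l^* \ge x_l^*$ for every max index $j$, and the affine equalities hold. For minimality, fix for each max index $j$ a selector $\ell^*(j)\in\cJ(j)$ attaining the maximum, so that $x_j^* = x_{\ell^*(j)}^*$, and let $\mQ_{\ell^*}\in\cQ$ be the induced matrix whose max rows are $\ee_{\ell^*(j)}\T$ and whose affine rows are $\qq_k\T$. Then $\xx^* = \mQ_{\ell^*}\xx^* + \bb$, i.e. $\xx^* = (\mI-\mQ_{\ell^*})^{-1}\bb$. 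Here $(\mI-\mQ_{\ell^*})^{-1}$ exists and is non-negative by the linear-system lemma stated above: $\lim_{m\to\infty}\mQ_{\ell^*}^m=\0$ holds because $\mQ_{\ell^*}\in\cQ\subseteq\mathbf{conv}(\cQ)$ and condition~C\ref{condition:conv_halting} applies, while $\mQ_{\ell^*}\ge 0$ holds by condition~C\ref{condition:non-negative}.

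Now let $\xx$ be any feasible point. For each max index $j$ we have $x_j \ge x_{\ell^*(j)} = (\mQ_{\ell^*}\xx)_j$, and for each affine index $k$ we have $x_k = (\mQ_{\ell^*}\xx+\bb)_k$; combining these gives $(\mI-\mQ_{\ell^*})\xx \ge \bb$ coordinate-wise. Multiplying both sides by the non-negative matrix $(\mI-\mQ_{\ell^*})^{-1}$ preserves the inequality and yields $\xx \ge (\mI-\mQ_{\ell^*})^{-1}\bb = \xx^*$. Consequently every feasible $\xx$ satisfies $\xx\ge\xx^*$, so $\1\T\xx \ge \1\T\xx^*$ with equality only at $\xx^*$; the LP is thus bounded, its optimum is attained at the unique point $\xx^*$, and it is solvable in polynomial time on rational input (e.g. by the ellipsoid method), which computes $\xx^*$ and establishes the \textsc{PTIME} bound.

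The \emph{main obstacle} is the minimality claim, which is precisely the monotonicity statement $(\mI-\mQ_{\ell^*})^{-1}\ge 0$. This is exactly what fails without condition~C\ref{condition:non-negative} (consistent with the paper's examples showing that monotonicity breaks in the absence of C\ref{condition:non-negative}/C\ref{condition:sum-to-1}), so the argument genuinely relies on both condition~C\ref{condition:conv_halting} (for invertibility) and condition~C\ref{condition:non-negative} (for non-negativity), together with condition~C\ref{condition:one-type}, which is what reduces the two-sided fixed-point equation to a single induced linear system $\xx = \mQ_{\ell^*}\xx + \bb$ whose least feasible solution is captured exactly by the linear program.
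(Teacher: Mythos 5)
Your proof is correct and takes essentially the same route as the paper: both reduce the problem to the same linear program $\min\ \1\T\xx$ over the relaxed constraints, using \cref{lemma:existence&uniqueness} for existence of the LEMM solution and \cref{lemma:linear system existence&uniqueness} for invertibility and non-negativity of $(\mI-\mQ)^{-1}$ under conditions~C\ref{condition:conv_halting} and~C\ref{condition:non-negative}. The only difference is the final identification step: you show the LEMM solution is the coordinate-wise least feasible point by multiplying $(\mI-\mQ_{\ell^*})\xx\ge\bb$ by the non-negative inverse, whereas the paper argues that any strict inequality at an LP optimum could be decreased (so the optimum satisfies all equations) and then invokes uniqueness; both arguments are valid and rest on the same ingredients.
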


\begin{corollary}
	\label{cor:PTIME results}
	The \textsc{PTIME} complexity results in the fifth row of \cref{tab:complexity results} are established.
\end{corollary}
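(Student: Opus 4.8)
The plan is to reduce, under conditions C\ref{condition:conv_halting}, C\ref{condition:non-negative}, and C\ref{condition:one-type}, the computation of the relevant coordinate of the solution to a linear program of polynomial size. By C\ref{condition:one-type} I would treat the max-only case $n_1 = 0$ (the min-only case is symmetric), and by C\ref{condition:conv_halting} the LEMM has a unique feasible solution $\xx^*$ (Lemma~\ref{lemma:existence&uniqueness}); hence, since the feasible solution is unique, the decision problem collapses to deciding whether $x^*_i < \beta$.

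The central step I would establish is that $\xx^*$ is the \emph{least super-solution}, where I call $\mathbf{y}$ a super-solution if $y_j \ge y_l$ for every max variable $j$ and every $l \in \cJ(j)$, and $y_k \ge \qq_k\T \mathbf{y} + b_k$ for every affine variable $k$. To prove this, let $\mQ^* \in \cQ$ be the matrix obtained by selecting for each max variable $j$ an index $\ell^*(j) \in \cJ(j)$ attaining $x^*_j = \max_{l \in \cJ(j)} x^*_l$ and keeping the affine rows $\qq_k$; then $\xx^* = \mQ^* \xx^* + \bb$, and by C\ref{condition:non-negative} we have $\mQ^* \ge 0$. For any super-solution $\mathbf{y}$, the inequalities $y_j \ge y_{\ell^*(j)}$ and $y_k \ge \qq_k\T \mathbf{y} + b_k$ give $\mathbf{y} \ge \mQ^* \mathbf{y} + \bb$ coordinatewise, so subtracting the fixed-point identity yields $(\mI - \mQ^*)(\mathbf{y} - \xx^*) \ge 0$. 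Since $\mQ^* \in \cQ \subseteq \mathbf{conv}(\cQ)$, condition C\ref{condition:conv_halting} gives $\lim_m (\mQ^*)^m = \0$, and Lemma~\ref{lemma:linear system existence&uniqueness} (using $\mQ^* \ge 0$) yields $(\mI - \mQ^*)^{-1} \ge 0$; multiplying through gives $\mathbf{y} \ge \xx^*$.

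With the least-super-solution property in hand, I would observe that $x^*_i = \min \{ y_i : \mathbf{y} \text{ is a super-solution} \}$, since $\xx^*$ is itself a super-solution and is coordinatewise minimal. This minimum is the linear program
\begin{align*}
	\min\ \ & y_i \\
	\text{s.t.}\ \ & y_j \ge y_l \quad \text{for } j \in [n_2],\ l \in \cJ(j), \\
	& y_k \ge \qq_k\T \mathbf{y} + b_k \quad \text{for } n_2 < k \le n,
\end{align*}
which has polynomially many constraints and rational data, hence is solvable in polynomial time with a rational optimum. I would then answer \textsc{Yes} exactly when this optimum is below $\beta$. The min-only case is handled symmetrically: $\xx^*$ is the greatest \emph{sub-solution} (all inequalities reversed), so $x^*_i = \max\{ y_i : \mathbf{y} \text{ is a sub-solution}\}$, and the same non-negativity of $(\mI - \mQ^*)^{-1}$ drives the comparison.

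I expect the main obstacle to be the least-super-solution claim, which is precisely where C\ref{condition:non-negative} and C\ref{condition:one-type} are indispensable. Condition C\ref{condition:one-type} is what allows a \emph{single} matrix $\mQ^*$ to witness the comparison against \emph{every} super-solution simultaneously; with both operators present one cannot fix such a matrix, which is consistent with the SSG-hardness of that regime. Condition C\ref{condition:non-negative} is what makes $(\mI - \mQ^*)^{-1} \ge 0$, turning the residual inequality $(\mI-\mQ^*)(\mathbf{y}-\xx^*)\ge 0$ into the coordinatewise domination $\mathbf{y} \ge \xx^*$; without it the resolvent may have negative entries and the monotone comparison breaks, matching the non-monotonicity phenomena highlighted earlier for the C\ref{condition:conv_halting}-only regime.
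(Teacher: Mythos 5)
Your argument for the $\{\textnormal{C\ref{condition:conv_halting},C\ref{condition:non-negative},C\ref{condition:one-type}}\}$ case is correct and is a genuine variant of the paper's proof of \cref{thm:1,2,4 is in P}. Both reduce the problem to a linear program over the set of super-solutions, but the key step differs. The paper minimizes $\sum_j x_j$, first shows this LP is bounded (via $\1\T(\mI-\mQ)^{-1}\bb \ge 0$, which uses $\bb \ge 0$), and then argues that any optimal point must satisfy every constraint with equality by a local perturbation: if a coordinate is slack it can be decreased without breaking feasibility because $\qq_k \ge 0$, contradicting optimality; uniqueness from \cref{lemma:existence&uniqueness} then identifies the LP optimum with the LEMM solution. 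You instead prove the global statement that the unique solution $\xx^*$ is the \emph{least} super-solution, via a fixed selector $\mQ^* \in \cQ$ and the non-negativity of $(\mI-\mQ^*)^{-1}$ from \cref{lemma:linear system existence&uniqueness}, which lets you minimize the single coordinate $y_i$ directly. This is the classical ``value equals least superharmonic function'' argument for MDPs; it cleanly isolates where C\ref{condition:non-negative} (non-negative resolvent) and C\ref{condition:one-type} (a single witnessing matrix $\mQ^*$) enter, it sidesteps the separate boundedness step since the optimum is pinned at $x^*_i$, and your symmetric treatment of the min-only case is sound because the offset $\bb$ cancels in the comparison $(\mI-\mQ^*)(\mathbf{y}-\xx^*) \ge 0$.

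One omission: the fifth row of \cref{tab:complexity results} also contains $\{\textnormal{C\ref{condition:non-negative},C\ref{condition:sum-to-1},C\ref{condition:one-type}}\}$, which does \emph{not} follow from the case you proved. Without condition~C\ref{condition:conv_halting}, \cref{lemma:existence&uniqueness} and the limit $\lim_{m}(\mQ^*)^m = \0$ are unavailable, uniqueness can fail, and your collapse of the decision problem to ``is $x^*_i < \beta$'' breaks down. The paper handles that entry, and $\{\textnormal{C\ref{condition:conv_halting},C\ref{condition:non-negative},C\ref{condition:sum-to-1},C\ref{condition:one-type}}\}$, by citing the classical MDP result (\cref{thm:ssg complexity}, item~3) together with \cref{basic fact}; your proof covers $\{\textnormal{C\ref{condition:conv_halting},C\ref{condition:non-negative},C\ref{condition:one-type}}\}$ and hence its superset, but you should state explicitly that the remaining entry is imported from the literature rather than derived from your theorem.
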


\noindent{\em Explanation.}
\cref{thm:ssg complexity} (item~\ref{previous result: C2,C3,C4 is in P}) and \cref{basic fact} establishes the results for the LEMM decision problems under $\{ \textnormal{C\ref{condition:conv_halting},C\ref{condition:non-negative},C\ref{condition:sum-to-1},C\ref{condition:one-type}} \}$ and $\{ \textnormal{C\ref{condition:non-negative},C\ref{condition:sum-to-1},C\ref{condition:one-type}} \}$, and Theorem~\ref{thm:1,2,4 is in P} establishes the result for LEMMs under $\{ \textnormal{C\ref{condition:conv_halting},C\ref{condition:non-negative},C\ref{condition:one-type}} \}$.

\noindent\textbf{Classic properties.}
Before presenting our proofs, we explain the key difficulties in establishing the results as compared to previous analysis in the literature (\eg, \citet{condon1992complexity,ludwig1995subexponential,chatterjee2023faster}).
For halting stochastic games (\ie, the LEMM under $\{\textnormal{C\ref{condition:conv_halting},C\ref{condition:non-negative},C\ref{condition:sum-to-1}}\}$), the ``monotonicity'', ``minimax equality'', and ``solution from subsolution'' properties defined below hold and were used in previous analysis.
We show that these key properties do not hold under condition~C\ref{condition:conv_halting} only.
First, we define these properties.

\begin{itemize}
	\item
		\textbf{Monotonicity.}
		For LEMMs $(n_1,n_2,n,\cJ,\qq,\cdot)$ with a unique solution, we say that the LEMMs are monotonic if, for all $\bb_1 \ge \bb_2$, if $\xx^1$ is the solution of the LEMM with $(n_1,n_2,n,\cJ,\qq,\bb_1)$, and $\xx^2$ is the solution of the LEMM with $(n_1,n_2,n,\cJ,\qq,\bb_2)$, then $\xx^1 \ge \xx^2$.
	\item
		\textbf{Minimax equality.}
		Consider LEMMs $(n_1,n_2,n,\cJ,\qq,\bb)$ such that, for all assignments $\ell \colon [n_1+n_2] \mapsto [n]$ where
		$$
		\ell (k) \in \cJ(k), \ \forall k \in [n_1+n_2] \,,
		$$
		defining
		$$
		\mQ ^{(\ell)}
		= \left[ \ee_{\ell(1)}, \ldots, \ee_{\ell(n_1+n_2)}, \qq_{n_1+n_2+1}, \ldots, \qq_{n} \right] \T ,
		$$
		we have that $(\mI-\mQ^{(\ell)})$ is invertible.
		Then, taking
		$
		\xx^{(\ell)} = (\mI-\mQ^{(\ell)})^{-1} \bb
		$, the LEMM satisfies the minimax equality if, for all $k \in [n]$,
		$$
		\min _{\substack{\ell(i) \in \cJ (i), \\ 1 \le i \le n_1}} \max _{\substack{\ell(j) \in \cJ(j), \\ n_1 < j \le n_1+n_2}} x^{(\ell)}_k = \max _{\substack{\ell(j) \in \cJ(j), \\ n_1 < j \le n_1+n_2}} \min _{\substack{\ell(i) \in \cJ (i), \\ 1 \le i \le n_1}} x^{(\ell)}_k .
		$$
	\item
		\textbf{Solution from subsolution.}
		Consider LEMMs $(n_1,n_2,n,\cJ,\qq,\bb)$ and $1 \le i \le n_1$ (resp.\ $n_1 < j \le n_1 + n_2$), where, for all $l \in \cJ(i)$ (resp.\ $l \in \cJ(j)$), modifying the $i$-th equation in the LEMM by
		$$
		x_i = x_l \text{ (or $x_j = x_l$)}
		$$
		we obtain a new LEMM with a unique solution $\xx^{l}$ and let
		$$
		l^* = \arg \min _{l \in \cJ(i)} x ^{l} _i \text{ (or $ l^* = \arg \max _{l \in \cJ(j)} x ^{l} _j$)}.
		$$
		Then, we say the original LEMM has a solution from its subsolution if, for all $l \in \cJ(i)$ (resp.\ $l \in \cJ(j)$), we have that $\xx^{l^*}$ is the solution of the original LEMM.
\end{itemize}

We present illuminating counterexamples showing that the above classic properties break when condition C1 holds, and conditions C2 and C3 do not hold.

\begin{example}[Non-monotonicity]
	\label{example:non-monotonicity}
	Consider the LEMM
	$$
	\left\{
	\begin{aligned}
		x_1 &= \max\ \{ x_2 ,\ x_3 \} , \\
		x_2 &= -x_3 + 1, \\
		x_3 &= 0.5 x_3 + 0.2,
	\end{aligned}
	\right.
	$$
	which satisfies condition~C\ref{condition:conv_halting} and has a unique solution $\xx = [0.6, 0.6, 0.4] \T$.
	Increasing the value of $b_3$, we have the following LEMM
	$$
	\left\{
	\begin{aligned}
		x^\prime_1 &= \max\ \{ x^\prime_2 ,\ x^\prime_3 \} \,, \\
		x^\prime_2 &= -x^\prime_3 + 1 \,, \\
		x^\prime_3 &= 0.5 x^\prime_3 + 0.25 \,,
	\end{aligned}
	\right.
	$$
	with a unique solution $\xx^\prime = [0.5, 0.5, 0.5] \T$, so these LEMMs do not satisfy the monotonicity property.
	\qed
\end{example}

\begin{example}[$\min\ \max \neq \max\ \min$]
	\label{example:minimax inequality}
	Consider the LEMM
	$$
	\left\{
	\begin{aligned}
		x_1 &= \min \ \{ x_3, x_4 \} , \\
		x_2 &= \max \ \{ x_5, x_6 \} , \\
		x_3 &= -0.18 x_1 + 0.72 x_2 , \\
		x_4 &= 0.36 x_1 - 0.18 x_2 , \\
		x_5 &= 0.36 x_1 - 0.54 x_2 + 2 , \\
		x_6 &= -0.18 x_1 - 0.36 x_2 + 2 ,
	\end{aligned}
	\right.
	$$
	which satisfies condition~\textnormal{C\ref{condition:conv_halting}}, while we have
	$$
	\min _{l(1) \in \{3,4\}} \ \max _{l(2) \in \{5,6\}} \ x^{(l)}_2 = x^{(l(1)=3,\ l(2)=5)} _2 > 1.5 \,,
	$$
	and
	$$
	\max _{l(2) \in \{5,6\}} \ \min _{l(1) \in \{3,4\}} \ x^{(l)}_2 = x^{(l(1)=3,\ l(2)=6)} _2 < 1.4 \,.
	$$
	This violates the minimax equality.
	\qed
\end{example}

\begin{example}[Misleading solution from subsolution]
	\label{example:misleading subsolution}
	Consider the LEMM
	$$
	\left\{
	\begin{aligned}
		x_1 &= \min\ \{ x_2 ,\ x_3 \} \,, \\
		x_2 &= -0.1 x_1 + 0.8 x_4 + 2.2 \,, \\
		x_3 &= 0.1 x_1 + 0.5 x_4 + 2.2 \,, \\
		x_4 &= - 0.5 x_1 + 1.2 x_4 - 1.4 \,,
	\end{aligned}
	\right.
	$$
	which satisfies condition~C\ref{condition:conv_halting}.
	Choosing $x_1 = x_2$, the subsolution is
	$$
	\xx^{2}
	= \left[ -\frac {26} {3}, - \frac {26} {3}, -6, - \frac {44} {3} \right] \T,
	$$
	while choosing $x_1 = x_3$, the subsolution is
	$$
	\xx^{3}
	= \left[ - \frac {114} {7}, - \frac {162} {7}, -\frac {114} {7}, -\frac {236} {7} \right] \T .
	$$
	Even when $x^{2}_1 > x^{3}_1$, the \emph{larger} subsolution $\xx^{2}$ is the solution of the original LEMM.
	This violates the solution from subsolution property, which states that $x^{3}$ should be the solution to the original LEMM.
	\qed
\end{example}

\noindent{\bf Novelty.}
We present our key technical lemma, which shows that condition~C\ref{condition:conv_halting} already ensures the existence of a unique solution.
In the context of halting SSGs (\ie, $\{\textnormal{C\ref{condition:conv_halting},C\ref{condition:non-negative},C\ref{condition:sum-to-1}}\}$), the proof of NP $\cap$ coNP is based on the minimax equality since, given the choices of the min (resp.\ max) variables as the certificate of the \textsc{Yes} (resp.\ \textsc{No}) answer, it can be verified by solving the remaining max-only (resp.\ min-only) subproblem via linear programming \cite{condon1992complexity,ludwig1995subexponential,chatterjee2023faster}.
Our main novelty is to present a new analysis technique that establishes the desired complexity results even though the classic properties that were crucial in the previous analysis do not hold.

\begin{lemma}[Key lemma]
	\label{lemma:existence&uniqueness}
	Under condition~\textnormal{C\ref{condition:conv_halting}}, there exists a unique solution to the LEMM.
\end{lemma}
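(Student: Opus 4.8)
The plan is to recast the LEMM as a fixed-point problem for the piecewise-affine map $F \colon \R^n \to \R^n$ whose $i$-th coordinate is $\min_{l \in \cJ(i)} x_l$, $\max_{l \in \cJ(j)} x_l$, or $\qq_k\T \xx + b_k$ according to whether the index is a min, max, or affine index; a feasible solution is then exactly a fixed point of $F$, equivalently a zero of $\Phi \coloneqq \mathrm{id} - F$. The engine of the whole argument is the following linearization claim: for any $\xx, \mathbf{y} \in \R^n$ there is a matrix $\mQ \in \mathbf{conv}(\cQ)$ with $F(\xx) - F(\mathbf{y}) = \mQ(\xx - \mathbf{y})$. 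I would prove this row by row. For an affine row $k$ the difference equals $\qq_k\T(\xx - \mathbf{y})$, the fixed row of every matrix in $\cQ$. For a min row $i$, writing $d_l = x_l - y_l$ and letting $a,b \in \cJ(i)$ attain the minima of $\xx$ and $\mathbf{y}$ respectively, a short computation shows that $\min_l x_l - \min_l y_l$ lies between $d_a$ and $d_b$, hence equals $\big(\alpha\, \ee_a\T + (1-\alpha)\, \ee_b\T\big)(\xx - \mathbf{y})$ for some $\alpha \in [0,1]$; max rows are symmetric. Since $\cQ$ is a Cartesian product of its row-sets and the convex hull of a product equals the product of the convex hulls, assembling these rows yields the desired $\mQ \in \mathbf{conv}(\cQ)$.

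Uniqueness is then immediate. If $\xx$ and $\mathbf{y}$ are both solutions, the claim gives $\mQ \in \mathbf{conv}(\cQ)$ with $(\mI - \mQ)(\xx - \mathbf{y}) = \0$; by condition~C\ref{condition:conv_halting} together with \cref{lemma:linear system existence&uniqueness} the matrix $\mI - \mQ$ is invertible, forcing $\xx = \mathbf{y}$. Equivalently, $\Phi$ is injective on all of $\R^n$.

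For existence I would run a topological (degree-type) argument on $\Phi$, since the absence of monotonicity (\cref{example:non-monotonicity}) rules out the usual value- and policy-iteration proofs. Two facts suffice. First, $\Phi$ is proper: because $\mathbf{conv}(\cQ)$ is compact and $\mQ \mapsto \sigma_{\min}(\mI - \mQ)$ is continuous and strictly positive there, it attains a minimum $c > 0$, so $\norm{(\mI - \mQ)\xx} \ge c \norm{\xx}$ uniformly over $\mathbf{conv}(\cQ)$; applying the linearization to $\xx$ and $\0$ gives $\norm{\Phi(\xx)} \ge c\norm{\xx} - \norm{\Phi(\0)} \to \infty$, so preimages of compact sets are bounded and $\Phi$ is a closed map. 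Second, $\Phi$ is continuous and injective, hence by invariance of domain an open map. Therefore $\Phi(\R^n)$ is a nonempty subset of $\R^n$ that is both open and closed, and so equals $\R^n$; in particular $\0 \in \Phi(\R^n)$, which produces the (unique) solution.

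The main obstacle is existence. Uniqueness and the linearization claim are elementary, but producing a fixed point without monotonicity or a global contraction is the delicate part: condition~C\ref{condition:conv_halting} controls only the spectral radius of each individual convex combination, not the joint spectral radius of products of such matrices, so iterative schemes need not converge. The surjectivity argument sidesteps this entirely by trading dynamics for topology, with the single quantitative input being the uniform invertibility bound $\sigma_{\min}(\mI - \mQ) \ge c > 0$ on the compact set $\mathbf{conv}(\cQ)$.
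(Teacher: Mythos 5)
Your proof is correct, but it takes a genuinely different route from the paper's. The paper argues by induction on the number of min and max variables: in the inductive step it replaces the last max equation $x_{m+1} = \max\{x_i, x_j\}$ by the interpolation $x_{m+1} = \alpha x_i + (1-\alpha) x_j$, invokes the induction hypothesis to obtain a unique solution $\xx^{(\alpha)}$ for each $\alpha \in [0,1]$, proves that $\alpha \mapsto \xx^{(\alpha)}$ is continuous, and derives a contradiction from the intermediate value theorem by finding $\alpha^\star$ with $x^{(\alpha^\star)}_i = x^{(\alpha^\star)}_j$. You instead establish a global mean-value-type linearization $F(\xx) - F(\mathbf{y}) = \mQ(\xx - \mathbf{y})$ with $\mQ \in \mathbf{conv}(\cQ)$ (your row-by-row argument for the min/max rows and the product structure of $\cQ$ are both sound), which makes uniqueness a one-line consequence of the invertibility of $\mI - \mQ$, and then obtain existence from injectivity plus properness plus invariance of domain, so that $\Phi(\R^n)$ is open, closed, and nonempty in the connected space $\R^n$. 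Your approach is arguably more conceptual: it explains why condition~C\ref{condition:conv_halting} is phrased over the full convex hull rather than over $\cQ$ alone, it yields the uniform coercivity bound $\norm{\Phi(\xx) - \Phi(\mathbf{y})} \ge c \norm{\xx - \mathbf{y}}$ as a quantitative byproduct, and it avoids the paper's somewhat delicate continuity argument for $\alpha \mapsto \xx^{(\alpha)}$. The trade-off is that you invoke invariance of domain, a nontrivial Brouwer-level topological theorem, whereas the paper's induction rests only on linear algebra and the intermediate value theorem.
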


\begin{proof}[Main proof idea]
	We prove by induction on $n_1+n_2$.
	
	(i) \emph{Base case.} The base case $n_1+n_2=0$ follows from \cref{lemma:linear system existence&uniqueness}.
	
	(ii) \emph{Inductive case.} Assume by induction that the existence and uniqueness hold when $n_1 + n_2 = m$, and we consider the case $n_1+n_2=m+1$.
	Without loss of generality, let $x_{m+1} = \max \{x_i, x_j\}$.
	By contradiction, assume that the LEMM either has no solution or has $2$ different solutions.
	For the reduced LEMM obtained by fixing $x_{m+1}$ as $x_j$ (resp.\ $x_i$), we get its unique solution $\xx^{(0)}$ (resp.\ $\xx^{(1)}$) by the induction hypothesis.
	Then, for any $\alpha \in (0, 1)$, fixing $x_{m+1}$ as $\alpha x_i + (1-\alpha) x_j$, we get a unique solution $\xx^{(\alpha)}$ for the reduced LEMM by induction hypothesis.
	Moreover, by continuity, there is $\alpha^*$ such that $\xx^{(\alpha^*)}_i = \xx^{(\alpha^*)}_j$.
	This is a contradiction because $\xx^{(\alpha^*)}$ becomes the unique solution for the original LEMM.
\end{proof}

\begin{proof}
	Without loss of generality, assume that $|\cJ(i)| = 2$, for all $i \in [n_1+n_2]$.
	We proceed by induction on $(n_1 + n_2)$.
	
	(i) \emph{Base case.}
	For $n_1 + n_2 = 0$, by Lemma~\ref{lemma:linear system existence&uniqueness}, there exists a unique solution $\xx = (\mI - \mQ)^{-1} \bb$, where $\mQ$ is the only element in $\cQ$.
	
	(ii) \emph{Inductive case.}
	By induction, all LEMMs with $n_1 + n_2 = m$ have a unique solution.
	We prove it is the case for all LEMMs with $n_1 + n_2 = m + 1$.
	By contradiction, consider an LEMM $(n_1,n_2,n,\cJ,\qq,\bb)$ with $n_1+n_2 = m + 1$ that has no solution or has $2$ different solutions.
	Without loss of generality $\cJ(m+1) = \{ i, j \}$, and $x_{m+1} = \max \{ x_i, x_j \}$ (if $x_{m+1}$ is a min variable, we only switch some inequalities). For all $\alpha \in [0,1]$, consider replacing this equation by $x_{m+1} = \alpha\ x_i + (1 - \alpha)\ x_j$.
	Then, by the induction hypothesis, the modified LEMM has a unique solution denoted $\xx^{(\alpha)}$.
	
	Note that $\xx^{(0)}$ is a solution to the original system with $(m+1)$ min and max variables if and only if ${x^{(0)}}_i \ge {x^{(0)}}_j$.
	Similarly, $\xx^{(1)}$ is a solution to the original system if and only if ${x^{(1)}}_i \le {x^{(1)}}_j$.
	Since the LEMM has no solution or has $2$ different solutions, the following holds:
	\begin{equation}
		\label{eq:ij<0}
		\left( {x^{(0)}}_i - {x^{(0)}}_j \right) \cdot \left( {x^{(1)}}_i - {x^{(1)}}_j \right) < 0 \,.
	\end{equation}
	
	We introduce the following notations for $\alpha \in [0,1]$.
	\begin{align*}
		\mE
		&\coloneqq [\ee_1, \cdots, \ee_{m}, \0, \ee_{m+2}, \cdots, \ee_n ] \T \,, \\
		\cQ ^{(\alpha)}
		&\coloneqq \left\{ \mE \mQ + \ee_{m+1} \left( \alpha \ee_i + (1-\alpha) \ee_j \right) \T \;\middle|\; \mQ \in \cQ \right\} \,, \\
		\cM ^{(\alpha)}
		&\coloneqq\left\{ \mM \in \cQ ^{(\alpha)} \;\middle|\; \xx^{(\alpha)} = \mM \xx^{(\alpha)} + \bb \right\} \,.
	\end{align*}
	Intuitively, $\mE \mQ + \ee_{m+1} \left( \alpha \ee_i + (1-\alpha) \ee_j \right) \T $ corresponds to replacing the $(m+1)$-th row of $\mQ$ by $(\alpha \ee_i + (1-\alpha)\ee_j) \T$.
	
	We show that $\alpha \in [0, 1] \mapsto \xx^{(\alpha)}$ is continuous.
	By the continuity of linear, $\min$, and $\max$ operators, there exists $\epsilon > 0$, such that, for all $\alpha^\prime \in \cB(\alpha; \epsilon) \coloneqq (\alpha-\epsilon, \alpha+\epsilon) \cap [0,1]$, defining $\cM ^{(\alpha; \alpha^\prime)}$ by
	\[
	\left\{ \mE \mM + \ee_{m+1} \left( \alpha^\prime \ee_i + (1-\alpha^\prime) \ee_j \right) \T \;\middle|\; \mM \in \cM^{(\alpha)} \right\} \,,
	\]
	we have
	$$
	\cM ^{(\alpha^\prime)}
	\subseteq \cM ^{(\alpha; \alpha^\prime)} \,.
	$$
	Therefore, for all sequences $(\alpha_n) _{n \in \N} \subseteq \cB(\alpha;\epsilon)$ that converges to $\alpha$, we have that
	\[
	\lvert \xx ^{(\alpha_n)} - \xx ^{(\alpha)} \rvert = \sup _{\mM \in \cM^{(\alpha_n)}} \lvert (\mI - \mM) ^{-1} \bb - \xx^{(\alpha)} \rvert
	\]
	is bounded from above by
	\[
	\sup _{\mM \in \cM^{(\alpha; \alpha_n)}} \lvert (\mI - \mM) ^{-1} \bb - \xx^{(\alpha)} \rvert \,,
	\]
	which is equal to the supremum over $\mM \in \cM^{(\alpha)}$ of
	\[
	\left|
	\begin{aligned}
		&\left( \mI - \mE \mM - \ee_{m+1} \left( \alpha_n \ee_i + (1-\alpha_n) \ee_j \right) \T \right) ^{-1} \bb \\
		&\quad - \left( \mI - \mE \mM - \ee_{m+1} \left( \alpha \ee_i + (1-\alpha) \ee_j \right) \T \right) ^{-1} \bb
	\end{aligned}
	\right| \,,
	\]
	which converges to $0$ as $n$ grows.
	Hence, $\alpha \in [0, 1] \mapsto \xx^{(\alpha)}$ is continuous.
	
	Going back to \cref{eq:ij<0}, by continuity, there exists $\alpha^\star \in [0,1]$ such that
	$$
	{x^{(\alpha^\star)}}_i - {x^{(\alpha^\star)}}_j = 0\,.
	$$
	Therefore, the two LEMMs after the modification $x_{m+1} = x_i$ and $x_{m+1} = x_j$ have the same solution $\xx^{(\alpha^*)}$, \ie, $\xx^{(0)} = \xx^{(1)} = \xx^{(\alpha^*)}$, so the original LEMM with $(m + 1)$ min and max variables has a unique solution $\xx^{(\alpha^\star)}$.
	This yields the desired contradiction, which concludes the proof by induction.
\end{proof}

Finally, we use \Cref{lemma:existence&uniqueness} to prove Theorem~\ref{thm:halting is NP intersects co-NP} and Theorem~\ref{thm:1,2,4 is in P}.

\begin{proof}[Proof of \cref{thm:halting is NP intersects co-NP}]
	By \Cref{lemma:existence&uniqueness}, we can use the unique solution of the LEMM as a (unique) certificate of the answer \textsc{Yes} or of the answer \textsc{No}, which is verifiable in polynomial time.
\end{proof}

\begin{proof}[Main proof idea of Theorem~\ref{thm:1,2,4 is in P}]
	Consider an LEMM $(n_1,n_2,n,\cJ,\qq,\bb)$ satisfying conditions~C\ref{condition:conv_halting},~C\ref{condition:non-negative}~and~C\ref{condition:one-type}.
	By \Cref{lemma:existence&uniqueness}, there exists a unique solution.
	Without loss of generality, assume it only has max-variables.
	Then, the solution is given by solving the linear program:
	$$
	\begin{alignedat}{2}
		\min \ &x_1 + \ldots + x_{n} \\
		\st &x_i \ge x_l,\quad & \forall l \in \cJ(i),\ \forall 1 \le i \le n_1+n_2 \,, \\
		&x_k \ge \qq_k\T \xx + b_k,\quad & \forall n_1+n_2 < k \le n \,.
	\end{alignedat}
	$$
	The desired result follows.
\end{proof}

\subsection{Equivalence between sub-classes of problems}
\label{subsec:equiv}

We establish the equivalence between the LEMM decision problems in the second row of \cref{tab:complexity results}.
\begin{lemma}
	\label{lemma:equivalence c1=c134}
	There is a polynomial time reduction from
	LEMMs under $\{\textnormal{C\ref{condition:conv_halting}}\}$ to LEMMs under $\{\textnormal{C\ref{condition:conv_halting},C\ref{condition:sum-to-1},C\ref{condition:one-type}}\}$.
\end{lemma}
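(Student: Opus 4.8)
The plan is to construct the target LEMM by two gadget transformations that preserve the value of every original variable (hence the queried pair $(i,\beta)$), where the first enforces condition~C\ref{condition:one-type} and the second enforces condition~C\ref{condition:sum-to-1}, while condition~C\ref{condition:conv_halting} is maintained throughout. To obtain C\ref{condition:one-type} I would make the system min-only using $\max\{x_a,x_b\}=-\min\{-x_a,-x_b\}$: introduce a negated copy $\hat x_l=-x_l$ (an affine equation) for every variable, replace each max equation $x_j=\max_{l\in\cJ(j)}x_l$ by a fresh min variable $m_j=\min_{l\in\cJ(j)}\hat x_l$ together with an affine equation $x_j=-m_j$, and leave the original min equations untouched. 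Each feasible $\xx$ extends uniquely to the enlarged system and conversely, so the \textsc{Yes}/\textsc{No} answer is unchanged, and by \cref{lemma:existence&uniqueness} both systems have a unique solution.

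For C\ref{condition:sum-to-1} I would rewrite every affine equation whose coefficient row sum plus constant exceeds $1$ as a small arithmetic circuit built only from C\ref{condition:sum-to-1}-safe primitives: negation $\hat z=-z$ (row sum $-1$), binary addition $z=u-\hat v$ computing $u+v$ (row sum $0$), doubling $u=z-\hat z$ computing $2z$ (row sum $0$), and copy $z=w$ (row sum $1$); a single coefficient of absolute value at most $1$ passes through directly. A coefficient $c$ with $|c|>1$ is synthesized from its integer part by $O(\log|c|)$ doublings, positive constants are built the same way from a unit constant, and each affine row is accumulated by a balanced addition tree. The result is a polynomially larger min-only system in which every affine row satisfies $\qq_k\T\1+b_k\le1$ and the value of each original variable is unchanged; correctness and polynomial size are then routine, since the helper variables are deterministic functions of the original ones and the min/max selections of the new system are in bijection with those of the original.

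The hard part is showing that condition~C\ref{condition:conv_halting} survives the gadgets, i.e.\ that $\lim_{m\to\infty}\mQ^m=\0$ still holds for every $\mQ\in\mathbf{conv}(\cQ)$ of the enlarged system. This is genuinely delicate: if a helper chain is allowed to carry a large coefficient, or if two cancelling contributions to one original coefficient reach their target through paths of different length, then the enlarged choice matrix can have spectral radius at least $1$ even though its Schur complement onto the original variables (which equals the original choice matrix) has spectral radius below $1$. I would therefore make every gadget a synchronous circuit of one global depth $D$ by padding shorter paths with copy gates, so that on the block of original variables the enlarged choice matrix acts, up to the uniform helper delay, as $\lambda^{-(D-1)}$ times the original choice matrix; an eigenvalue $\lambda$ of modulus at least $1$ of the enlarged matrix would then force $\lambda^{D}$ to be an eigenvalue of the original matrix, which has all eigenvalues of modulus strictly below $1$ by C\ref{condition:conv_halting}, whereas $|\lambda|\ge1$ gives $|\lambda^{D}|\ge1$, a contradiction (the helper block is nilpotent, contributing only the harmless eigenvalue $0$). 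Because the min/max selections correspond, this argument applies verbatim to every matrix in $\mathbf{conv}(\cQ)$, establishing C\ref{condition:conv_halting} for the target LEMM and completing the reduction.
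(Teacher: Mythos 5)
Your reduction is correct in outline and shares its two essential ideas with the paper's proof: the max operators are removed via negated copies using $\max\{a,b\}=-\min\{-a,-b\}$, and condition~C\ref{condition:conv_halting} is preserved by synchronizing all dependency paths from original variables back to original variables to one common length $D$, so that an eigenvalue $\lambda$ with $\lvert\lambda\rvert\ge 1$ of an enlarged choice matrix (or of a convex combination of them --- the helper rows are common to every choice matrix, so elimination of the nilpotent helper block turns convex combinations into convex combinations) would force $\lambda^{D}$ to be an eigenvalue of a matrix in the original convex hull, a contradiction; the paper's construction realizes exactly this with every such path having length two through the positive and negative copies. Where you genuinely diverge is condition~C\ref{condition:sum-to-1}. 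You decompose each affine row into an arithmetic circuit of negation, addition, doubling, and copy gates so that every row individually has row sum at most~$1$; this can be made to work, but it is much heavier than necessary, and it is precisely what forces you into the delicate depth-padding analysis for the coefficient-synthesis chains. The paper instead exploits that C\ref{condition:sum-to-1} constrains only the aggregate $\qq_k\T\1+b_k$, not individual coefficients: it appends a single dummy variable fixed to $0$ and gives each affine row one extra coefficient on it equal to the negative of its excess row sum. The values are unchanged because the dummy is zero, C\ref{condition:sum-to-1} holds exactly, and since the dummy's own row is identically zero the enlarged matrices acquire only an additional zero eigenvalue, so C\ref{condition:conv_halting} needs no further argument for that step. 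Your circuit gadget buys nothing over this one-variable trick and costs a larger system plus additional bookkeeping in the halting argument; apart from that, the two proofs follow the same route.
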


\begin{proof}[Main proof idea]
	From LEMMs under condition~C\ref{condition:conv_halting}, we construct a new, equivalent, min-only LEMM (that satisfies conditions~C\ref{condition:conv_halting}~and~C\ref{condition:one-type}).
	To do so, we introduce negative copies of the variables, so the $\max$ equations can be rewritten as $\min$ equations over the negative copies.
	We note that extra positive copies are required to keep the periodicity and thus the halting condition. Finally, we introduce a dummy variable to ensure condition~C\ref{condition:sum-to-1}.
\end{proof}

\begin{corollary}
	\label{cor:equivalence and ssg hardness}
	The LEMM decision problems under $\{\textnormal{C\ref{condition:conv_halting},C\ref{condition:sum-to-1},C\ref{condition:one-type}}\}$, $\{\textnormal{C\ref{condition:conv_halting},C\ref{condition:sum-to-1}}\}$, $\{\textnormal{C\ref{condition:conv_halting},C\ref{condition:one-type}}\}$, or $\{\textnormal{C\ref{condition:conv_halting}}\}$ are polynomially equivalent and SSG-hard.
	Hence, we establish the equivalence of the problems in the second row of \cref{tab:complexity results} and their SSG-hardness.
\end{corollary}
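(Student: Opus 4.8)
The plan is to close a cycle of polynomial-time reductions among the four problems using the single nontrivial reduction of \cref{lemma:equivalence c1=c134} together with the trivial inclusions of \cref{basic fact}, and then to chain one further reduction to obtain SSG-hardness. Write $A$, $B$, $C$, $D$ for the LEMM decision problems under $\{\textnormal{C\ref{condition:conv_halting}}\}$, $\{\textnormal{C\ref{condition:conv_halting},C\ref{condition:sum-to-1}}\}$, $\{\textnormal{C\ref{condition:conv_halting},C\ref{condition:one-type}}\}$, and $\{\textnormal{C\ref{condition:conv_halting},C\ref{condition:sum-to-1},C\ref{condition:one-type}}\}$, respectively. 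First I would record the reductions furnished by \cref{basic fact}: an instance satisfying a larger set of conditions is, in particular, an instance of the problem under any subset of those conditions with the same answer, so the identity map reduces the more restricted problem to the more general one. Using $\{\textnormal{C\ref{condition:conv_halting}}\} \subseteq \{\textnormal{C\ref{condition:conv_halting},C\ref{condition:sum-to-1}}\} \subseteq \{\textnormal{C\ref{condition:conv_halting},C\ref{condition:sum-to-1},C\ref{condition:one-type}}\}$ and the analogous chain through $\{\textnormal{C\ref{condition:conv_halting},C\ref{condition:one-type}}\}$, this yields $D \le_p B \le_p A$ and $D \le_p C \le_p A$.

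Next I would invoke \cref{lemma:equivalence c1=c134}, which supplies the reduction in the opposite direction, $A \le_p D$. Combining, the two cycles $A \le_p D \le_p B \le_p A$ and $A \le_p D \le_p C \le_p A$ collapse, so $A$, $B$, $C$, and $D$ are all polynomially equivalent. This is exactly the equivalence asserted for the second row of \cref{tab:complexity results}.

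For SSG-hardness it suffices to exhibit a polynomial reduction from SSGs to any one of these (now equivalent) problems. Recall that SSGs correspond to the LEMM decision problem under $\{\textnormal{C\ref{condition:non-negative},C\ref{condition:sum-to-1}}\}$, which by \cref{thm:ssg complexity} (item~\ref{previous result: C1,C2,C3=C2,C3}) is polynomially equivalent to the problem under $\{\textnormal{C\ref{condition:conv_halting},C\ref{condition:non-negative},C\ref{condition:sum-to-1}}\}$. Since $\{\textnormal{C\ref{condition:conv_halting}}\} \subseteq \{\textnormal{C\ref{condition:conv_halting},C\ref{condition:non-negative},C\ref{condition:sum-to-1}}\}$, \cref{basic fact} again gives an identity reduction from the latter to $A$. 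Chaining these, SSGs reduce to $A$, so $A$ is SSG-hard; by the equivalence just established, all four problems are SSG-hard.

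Granting \cref{lemma:equivalence c1=c134}, which carries all the real content (its construction with negative and positive copies of the variables preserves condition~C\ref{condition:conv_halting} while eliminating the max operators), this corollary is essentially bookkeeping with \cref{basic fact} and \cref{thm:ssg complexity}. The only point that requires care is that SSG-hardness must be routed through the most general problem $A$ rather than the most restricted $D$: a direct identity reduction of SSGs into $D$ fails because SSGs use both min and max operators and hence violate condition~C\ref{condition:one-type}, so one must first obtain hardness at $A$ and then transport it down via the equivalence.
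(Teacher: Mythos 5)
Your proposal is correct and follows essentially the same route as the paper: the equivalence comes from combining \cref{lemma:equivalence c1=c134} with the identity reductions of \cref{basic fact}, and SSG-hardness is obtained by transporting hardness from the problem under $\{\textnormal{C\ref{condition:non-negative},C\ref{condition:sum-to-1}}\}$ through $\{\textnormal{C\ref{condition:conv_halting},C\ref{condition:non-negative},C\ref{condition:sum-to-1}}\}$ (via \cref{thm:ssg complexity}) down to $\{\textnormal{C\ref{condition:conv_halting}}\}$ and then across the equivalence class. Your closing remark that hardness must be routed through the most general problem rather than injected directly into the min-only/max-only variant is a correct and worthwhile clarification of a point the paper leaves implicit.
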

\begin{proof}
	The equivalence follows from \cref{lemma:equivalence c1=c134}.
	The SSG-hardness of $\{\textnormal{C\ref{condition:conv_halting}}\}$ follows from \cref{thm:ssg complexity} and \cref{basic fact}.
\end{proof}

\section{Complexity of condition decision problems}
\label{sec:checking conditions}

In this section, we establish complexity results for the condition decision problem.
In what follows, we consider that inputs of the problems are given as rational numbers as usual.
We start with a remark on the basic and previous complexity results and a basic fact.

\begin{lemma}[Basic and previous complexity results]
	The condition decision problems for $\{ \textnormal{C\ref{condition:non-negative}} \}$, $\{ \textnormal{C\ref{condition:sum-to-1}} \}$, or $\{ \textnormal{C\ref{condition:one-type}} \}$ are all easily solvable in linear time. Moreover, the condition decision problem for $\{ \textnormal{C\ref{condition:conv_halting},C\ref{condition:non-negative},C\ref{condition:sum-to-1}} \}$ is in \textsc{PTIME} \citep{baier2008principles}.
\end{lemma}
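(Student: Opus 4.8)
The first part is immediate: conditions~C\ref{condition:non-negative}, C\ref{condition:sum-to-1}, and C\ref{condition:one-type} are all \emph{local} and can be verified by a single scan of the input. For C\ref{condition:non-negative} we read every entry of each $\qq_k$ together with each $b_k$ and test non-negativity; for C\ref{condition:sum-to-1} we accumulate $\qq_k\T\1 + b_k$ for each affine row and compare it with $1$; and for C\ref{condition:one-type} we merely check whether $n_1=0$ or $n_2=0$. Each touches every number of the input a constant number of times, giving linear time. The substantive part is deciding C\ref{condition:conv_halting} for $\{\textnormal{C\ref{condition:conv_halting},C\ref{condition:non-negative},C\ref{condition:sum-to-1}}\}$; after certifying C\ref{condition:non-negative} and C\ref{condition:sum-to-1} in linear time it remains to decide C\ref{condition:conv_halting}, and the plan is to reduce it to a purely graph-theoretic reachability question on the induced stochastic structure.

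The key observation is that, under C\ref{condition:non-negative} and C\ref{condition:sum-to-1}, every $\mQ \in \mathbf{conv}(\cQ)$ is sub-stochastic: each affine row equals some fixed $\qq_k \ge 0$ with $\qq_k\T\1 \le 1$, while each min/max row is a convex combination $\sum_i \alpha_i \ee_{\ell_i}$ of indicator vectors, hence a probability vector that never leaks mass. By \cref{lemma:linear system existence&uniqueness} together with the standard fact that for a non-negative sub-stochastic matrix $\lim_m \mQ^m = \0$ holds iff its spectral radius is below $1$, such a $\mQ$ fails to halt exactly when the associated chain admits a non-empty \emph{closed, non-leaking} set of states $S$: no transition escapes $S$, every affine $k \in S$ satisfies $\qq_k\T\1 = 1$ with $\mathrm{supp}(\qq_k) \subseteq S$, and every min/max row in $S$ routes within $S$.

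First I would show that it suffices to quantify over the \emph{vertices} $\cQ$ instead of all of $\mathbf{conv}(\cQ)$: if $\mQ = \sum_i \alpha_i \mQ_i$ admits a closed non-leaking $S$, then for each min/max index $j \in S$ the support of its row lies in $S$, so some pure choice $\ell(j) \in \cJ(j) \cap S$ exists; collecting these choices gives a vertex $\mQ_\ell \in \cQ$ for which $S$ is still closed and non-leaking. Thus C\ref{condition:conv_halting} \emph{fails} iff there exist a pure choice of successors for the min and max variables and a non-empty $S$ that is closed and non-leaking. Since min/max rows need only \emph{some} successor inside $S$ whereas affine rows need \emph{all} of their support inside $S$ with no leakage, the maximal such $S$ is the greatest fixpoint of the monotone operator that repeatedly deletes (i)~every affine $k$ with $\qq_k\T\1 < 1$ or $\mathrm{supp}(\qq_k) \not\subseteq S$, and (ii)~every min/max $j$ with $\cJ(j) \cap S = \emptyset$. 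This fixpoint is reached in at most $n$ rounds of linear-time deletions, and C\ref{condition:conv_halting} holds iff it is empty; this is precisely the qualitative reachability / bottom-SCC computation of \citet{baier2008principles}, placing the problem in \textsc{PTIME}.

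The main obstacle is the reduction from the continuum $\mathbf{conv}(\cQ)$ to a finite combinatorial condition: one must argue carefully that (a)~$\lim_m \mQ^m = \0$ is equivalent to the absence of a closed non-leaking set, and (b)~the existence of such a set for \emph{some} $\mQ \in \mathbf{conv}(\cQ)$ is equivalent to its existence for some \emph{vertex} of $\cQ$, so that the infinitely many matrices collapse into the single greatest-fixpoint computation above. Once this equivalence is established, the polynomial-time bound is routine bookkeeping.
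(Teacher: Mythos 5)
Your argument is correct, but note that the paper does not actually prove this lemma: the linear-time part is treated as immediate, and the \textsc{PTIME} claim for $\{\textnormal{C\ref{condition:conv_halting},C\ref{condition:non-negative},C\ref{condition:sum-to-1}}\}$ is delegated entirely to the citation of \citet{baier2008principles}. Your reconstruction supplies exactly the standard argument behind that citation, and each step checks out: under C\ref{condition:non-negative} and C\ref{condition:sum-to-1} every $\mQ \in \mathbf{conv}(\cQ)$ is non-negative and sub-stochastic (the $b_k$ play no role in C\ref{condition:conv_halting}); for such matrices $\mQ^m \to \0$ fails iff a non-empty closed non-leaking set exists (Perron--Frobenius gives the set $\{i \mid v_i = \max_j v_j\}$ from a non-negative eigenvector for eigenvalue $1$); the collapse from the convex hull to the vertices of $\cQ$ works because affine rows are identical across all of $\cQ$ and a min/max row of a convex combination has support in $S$ only if some pure successor lies in $S$; and the greatest fixpoint of the monotone deletion operator is reached in at most $n$ rounds. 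It is worth contrasting this with the route the paper itself takes later: \cref{lemma:checking via LP} characterizes condition~C\ref{condition:conv_halting} under C\ref{condition:non-negative} \emph{alone} (no sum-to-one needed) by the feasibility of $\exists \xx \ge 0\ \forall \mQ \in \cQ:\ \xx \ge \mQ\xx + \1$, decided by linear programming in \cref{thm:PTIME check}, and \cref{cor:PTIME checks} then subsumes the present lemma via \cref{fact:checking}. Your approach buys a purely combinatorial algorithm with no LP solver, but it genuinely relies on C\ref{condition:sum-to-1} for sub-stochasticity, so it is less general than the paper's LP-based characterization.
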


\begin{fact}
	\label{fact:checking}
	If $\{\textnormal{C\ref{condition:conv_halting}}\} \subseteq X \subseteq Y$ represents two subsets of conditions, then the condition decision problem for $X$ is no easier than the condition decision problem for $Y$. Hence any complexity lower bound for $Y$ also holds for $X$, and any complexity upper bound for $X$ also holds for $Y$.
\end{fact}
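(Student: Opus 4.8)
The plan is to give a polynomial-time many-one reduction from the condition decision problem for $Y$ to the condition decision problem for $X$, exactly mirroring the inclusion argument behind \cref{basic fact}. Once such a reduction is in hand, solving the $X$-problem solves the $Y$-problem up to a cheap preprocessing step, so the $X$-problem is no easier than the $Y$-problem; this immediately yields both transfers claimed in \cref{fact:checking} — an upper bound for $X$ carries to $Y$, and a lower bound for $Y$ carries to $X$.

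The key observation I would exploit is that the ``extra'' conditions $Y \setminus X$ are all cheap to verify. Since $\{\textnormal{C\ref{condition:conv_halting}}\} \subseteq X$, condition~C\ref{condition:conv_halting} cannot lie in $Y \setminus X$, so $Y \setminus X \subseteq \{\textnormal{C\ref{condition:non-negative}}, \textnormal{C\ref{condition:sum-to-1}}, \textnormal{C\ref{condition:one-type}}\}$; by \cref{remark:condition} and the preceding lemma, each of these can be checked in linear time. I would then define the reduction as follows: given an LEMM $L$ (an instance of the $Y$-problem), first check in linear time whether every condition in $Y \setminus X$ holds for $L$. If some such condition fails, map $L$ to a fixed canonical \textsc{No}-instance of the $X$-problem — for instance the single affine-variable LEMM $x_1 = 2 x_1$, whose matrix $\mQ = [2]$ has $\lim_{m \to \infty} \mQ^m \neq \0$ and hence violates condition~C\ref{condition:conv_halting}, which belongs to $X$. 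Otherwise, map $L$ to itself as an instance of the $X$-problem.

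Correctness then follows from the identity that $L$ is a \textsc{Yes}-instance of the $Y$-problem if and only if all conditions in $X$ hold for $L$ \emph{and} all conditions in $Y \setminus X$ hold for $L$: when the latter fail, the image is a \textsc{No}-instance, and when they hold, the answer coincides with that of the $X$-problem on $L$. The mapping is computable in linear time, hence polynomial. I do not expect any serious obstacle here; the only point requiring care — and the reason the hypothesis $\{\textnormal{C\ref{condition:conv_halting}}\} \subseteq X$ is essential — is that every deferred check in $Y \setminus X$ avoids condition~C\ref{condition:conv_halting}. Were C\ref{condition:conv_halting} permitted in $Y \setminus X$, the preprocessing would itself be as hard as checking the halting condition (shown \textsc{coNP}-hard later), and the reduction would no longer be polynomial; this is precisely what the placement of C\ref{condition:conv_halting} inside $X$ rules out. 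Everything else is routine.
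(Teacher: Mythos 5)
Your reduction is correct and is exactly the justification the paper leaves implicit for this unproved ``basic fact'': since $\{\textnormal{C1}\} \subseteq X$ forces $Y \setminus X \subseteq \{\textnormal{C2},\textnormal{C3},\textnormal{C4}\}$, the extra conditions are linear-time checkable, so pre-screening them and mapping failures to a fixed C1-violating instance (your $x_1 = 2x_1$ works) gives a polynomial many-one reduction from the $Y$-problem to the $X$-problem. Your remark on why the hypothesis $\{\textnormal{C1}\} \subseteq X$ is needed is also the right one.
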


We present the following characterization, which implies that checking conditions C\ref{condition:conv_halting} and C\ref{condition:non-negative} can be done in polynomial time via linear programming.

\begin{lemma}[Technical lemma]
	\label{lemma:checking via LP}
	Consider the LEMM with $(n_1,n_2,n,\cJ,\qq,\bb)$ such that $\qq_k \ge 0$, for all $k \in [n_1+n_2+1, n]$.
	Then, condition~C\ref{condition:conv_halting} holds if and only if
	$$
	\exists \xx \in \R_{\ge 0}^n \ \forall \mQ \in \cQ \quad \xx \ge \mQ \xx + \1 \,.
	$$
\end{lemma}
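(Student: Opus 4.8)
The plan is to prove the biconditional in two directions, using the non-negativity hypothesis $\qq_k \ge 0$ to ensure that every $\mQ \in \cQ$ is entrywise non-negative. First I would observe that Condition~C\ref{condition:conv_halting} is a statement about the entire convex hull $\mathbf{conv}(\cQ)$, whereas the claimed characterization only quantifies over the finite generating set $\cQ$; reconciling these is the heart of the argument. For the direction ($\Leftarrow$), suppose there exists $\xx \in \R_{\ge 0}^n$ with $\xx \ge \mQ\xx + \1$ for every $\mQ \in \cQ$. Since any $\mM \in \mathbf{conv}(\cQ)$ is a convex combination $\mM = \sum_i \alpha_i \mQ_i$, and since $\xx \ge 0$ and each $\mQ_i \ge 0$, I would take the same convex combination of the inequalities $\xx \ge \mQ_i \xx + \1$ to obtain $\xx \ge \mM\xx + \1 \ge \mM\xx$ for all $\mM \in \mathbf{conv}(\cQ)$. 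Iterating the monotone non-negative map $\xx \mapsto \mM\xx$ gives $\xx \ge \mM^m \xx + (\mI + \mM + \cdots + \mM^{m-1})\1 \ge \mM^m\1 \cdot(\text{partial sums})$; more carefully, $\xx \ge \sum_{t=0}^{m-1}\mM^t \1$, so the non-negative partial sums $\sum_{t=0}^{m-1}\mM^t\1$ are bounded above by the fixed vector $\xx$. A bounded monotone series of non-negative terms forces $\mM^t \1 \to 0$, and since $\1$ dominates every standard basis vector and $\mM \ge 0$, this yields $\mM^m \to \0_{n\times n}$, which is exactly Condition~C\ref{condition:conv_halting}.

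For the converse ($\Rightarrow$), assume Condition~C\ref{condition:conv_halting} holds. By \cref{lemma:linear system existence&uniqueness}, for every $\mM \in \mathbf{conv}(\cQ)$ we have that $(\mI - \mM)$ is invertible with $(\mI - \mM)^{-1} \ge 0$ (using $\mM \ge 0$). In particular, for each generator $\mQ \in \cQ$ define $\xx_{\mQ} \coloneqq (\mI - \mQ)^{-1}\1 \ge 0$, which satisfies $(\mI - \mQ)\xx_{\mQ} = \1$, i.e.\ $\xx_{\mQ} = \mQ\xx_{\mQ} + \1$. The difficulty is that a single witness $\xx$ must work simultaneously for all $\mQ \in \cQ$, not a different $\xx_{\mQ}$ for each. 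I would resolve this by taking the entrywise maximum, $\xx \coloneqq \max_{\mQ \in \cQ} \xx_{\mQ}$ (coordinatewise over the finitely many generators). Then for any fixed $\mQ \in \cQ$ and each coordinate, since $\mQ \ge 0$ and $\xx \ge \xx_{\mQ}$, monotonicity gives $\mQ\xx + \1 \ge \mQ\xx_{\mQ} + \1 = \xx_{\mQ}$; taking the max over $\mQ$ of the right-hand side yields that $\xx$ satisfies $\xx \ge \mQ\xx + \1$ for every $\mQ$, as required. I need to verify this monotone-max step carefully coordinate by coordinate, which is routine given $\xx \ge 0$ and $\mQ \ge 0$.

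The main obstacle I expect is the transition between the finite set $\cQ$ and its convex hull in the ($\Leftarrow$) direction: one must check that a linear inequality that holds at every generator, when combined convexly, still gives a valid inequality of the same shape, and that the non-negativity of $\xx$ is genuinely used so that dropping the $+\1$ term ($\mM\xx + \1 \ge \mM\xx$) and iterating stays monotone. The role of the hypothesis $\qq_k \ge 0$ is precisely to guarantee $\mQ \ge 0$ for all generators (the min/max rows contribute indicator vectors $\ee_{\ell(j)} \ge 0$ automatically), which is what makes the monotonicity and non-negative inverse arguments go through; without it neither direction is valid. Finally, I would remark that the existential statement on the right-hand side is exactly a system of linear inequalities in $\xx$ over the finitely many matrices in $\cQ$, so feasibility can be decided by linear programming, which justifies the polynomial-time checkability of Conditions~C\ref{condition:conv_halting} and~C\ref{condition:non-negative} claimed before the lemma.
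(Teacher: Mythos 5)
Your ($\Leftarrow$) direction is correct and matches the paper's in spirit: you pass from the generators to the convex hull by taking convex combinations of the inequalities (using $\xx\ge 0$ and $\mM\ge 0$), and then conclude $\mM^m\to\0$ by bounding the partial sums $\sum_{t=0}^{m-1}\mM^t\1$ by $\xx$. The paper instead invokes a separate equivalence (Lemma~\ref{lemma:nonnegative entries}, proved via Perron--Frobenius); your telescoping argument is a slightly more elementary route to the same place and is fine, provided you note that $\mM^t\1\to\0$ entrywise dominates every column of $\mM^t$.

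The ($\Rightarrow$) direction, however, has a genuine gap. You set $\xx\coloneqq\max_{\mQ\in\cQ}\xx_{\mQ}$ with $\xx_{\mQ}=(\mI-\mQ)^{-1}\1$, and then argue: ``since $\xx\ge\xx_{\mQ}$ and $\mQ\ge 0$, monotonicity gives $\mQ\xx+\1\ge\mQ\xx_{\mQ}+\1=\xx_{\mQ}$.'' That inequality points the wrong way: it is a \emph{lower} bound on $\mQ\xx+\1$, whereas you need the \emph{upper} bound $\xx\ge\mQ\xx+\1$. Taking the max over $\mQ$ of the right-hand side does not repair this, and the step is not ``routine'': for an affine row $k$ one needs $\max_{\mQ}\qq_k\T\xx_{\mQ}\ge\qq_k\T\bigl(\max_{\mQ}\xx_{\mQ}\bigr)$, which for non-negative $\qq_k$ is exactly the reverse of the generally valid inequality. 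So as written, the argument does not establish that the coordinatewise max is a witness.

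The witness itself happens to be correct, but seeing why requires the paper's key input, Lemma~\ref{lemma:existence&uniqueness}: under condition~C1 the all-max system $\xx^\star=\max_{\mQ\in\cQ}\{\mQ\xx^\star+\1\}$ has a solution. From $\xx^\star\ge\mQ\xx^\star+\1$ one iterates (using $\mQ\ge 0$ and $\mQ^m\to\0$) to get $\xx^\star\ge(\mI-\mQ)^{-1}\1=\xx_{\mQ}$ for every $\mQ$, while the rowwise maximizer $\mQ^\star$ gives $\xx^\star=\xx_{\mQ^\star}$; hence $\xx^\star=\max_{\mQ}\xx_{\mQ}$ and is the desired non-negative witness. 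This is essentially what the paper does (it takes $\xx^\star$ directly as the certificate and deduces $\xx^\star\ge 0$ via Lemma~\ref{lemma:nonnegative entries}). Without appealing to the existence of this common fixed point, your construction is unjustified; with it, your extra detour through the individual $\xx_{\mQ}$'s is unnecessary.
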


\begin{theorem}
	\label{thm:PTIME check}
	The condition decision problem for $\{\textnormal{C\ref{condition:conv_halting},C\ref{condition:non-negative}}\}$ is in \textsc{PTIME}.
\end{theorem}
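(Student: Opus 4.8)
The plan is to combine the easy check of condition~C\ref{condition:non-negative} with the characterization of condition~C\ref{condition:conv_halting} provided by \cref{lemma:checking via LP}. First I would check condition~C\ref{condition:non-negative} directly: scan all affine rows and verify $\qq_k \ge 0$ and $b_k \ge 0$ for every $n_1+n_2 < k \le n$; this takes linear time, and if it fails we answer \textsc{No} immediately. Otherwise we now have $\qq_k \ge 0$ for all $k$, which is exactly the hypothesis of \cref{lemma:checking via LP}, so condition~C\ref{condition:conv_halting} holds if and only if the system $\exists \xx \in \R_{\ge 0}^n\ \forall \mQ \in \cQ\ \xx \ge \mQ\xx + \1$ is feasible. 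It therefore remains only to decide this feasibility in polynomial time.

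The key step is to observe that, although $\cQ$ may contain exponentially many matrices, the universal quantifier over $\cQ$ collapses to polynomially many linear inequalities because $\cQ$ has a Cartesian-product structure: the rows indexed by the min/max variables are chosen independently, with the $i$-th row ranging over $\{\ee_l\T : l \in \cJ(i)\}$, while the affine rows $\qq_k\T$ are fixed. Hence for a fixed $\xx$ the quantity $(\mQ\xx)_i$ attains its maximum over $\mQ \in \cQ$ row by row: it equals $\max_{l \in \cJ(i)} x_l$ for a min/max variable and $\qq_k\T\xx$ for an affine variable. Consequently the constraint $\xx \ge \mQ\xx + \1$ holding for every $\mQ \in \cQ$ is equivalent to the polynomial-size linear feasibility system
$$
\begin{aligned}
	x_i &\ge x_l + 1, && i \in [n_1+n_2],\ l \in \cJ(i), \\
	x_k &\ge \qq_k\T \xx + 1, && n_1+n_2 < k \le n, \\
	\xx &\ge \0 \,.
\end{aligned}
$$

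Finally, this system has at most $\sum_{i} |\cJ(i)| + (n - n_1 - n_2) + n = O(n^2)$ constraints over $n$ variables, all with coefficients of polynomial bit-size, so its feasibility can be decided in polynomial time by linear programming. Combining this with the linear-time check of condition~C\ref{condition:non-negative} yields a polynomial-time algorithm, establishing the result. I expect the main obstacle to be the justification of the row-wise decomposition, namely arguing cleanly that the independent per-row choices in $\cQ$ let the worst-case $\mQ$ be selected coordinatewise, so that the exponentially many universal constraints are captured exactly by the pointwise maxima over each $\cJ(i)$; once this is in place, the remainder is a routine appeal to polynomial-time linear programming.
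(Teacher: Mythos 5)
Your proposal is correct and follows essentially the same route as the paper: check condition~C\ref{condition:non-negative} directly, invoke \cref{lemma:checking via LP}, and reduce the universally quantified inequality over $\cQ$ to the identical polynomial-size linear feasibility system, which is then decided by linear programming. The row-wise decomposition you flag as the main obstacle is exactly the (implicit) justification the paper relies on when writing out the per-choice constraints $x_i \ge x_l + 1$ for each $l \in \cJ(i)$, so no gap remains.
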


\begin{proof}
	Condition~C\ref{condition:non-negative} can be easily checked.
	By \Cref{lemma:checking via LP}, under condition~C\ref{condition:non-negative},
	we can check condition~C\ref{condition:conv_halting} via the feasibility of the following system of linear inequalities:
	$$
	\left\{
	\begin{aligned}
		x_i &\ge x_l + 1, \quad & 1 \le i \le n_1+n_2, \ l \in \cJ(i) , \\
		x_k &\ge \qq_k \T \xx + 1, \quad & n_1+n_2  < k \le n, \\
		\xx &\ge 0 . 
	\end{aligned}
	\right.
	$$
\end{proof}

\begin{corollary}
	\label{cor:PTIME checks}
	The condition decision problems for {$\{\textnormal{C\ref{condition:conv_halting},C\ref{condition:non-negative}}\}$}, {$\{\textnormal{C\ref{condition:conv_halting},C\ref{condition:non-negative},C\ref{condition:one-type}}\}$}, $\{\textnormal{C\ref{condition:conv_halting},C\ref{condition:non-negative},C\ref{condition:sum-to-1}}\}$, or $\{\textnormal{C\ref{condition:conv_halting},C\ref{condition:non-negative},C\ref{condition:sum-to-1},C\ref{condition:one-type}}\}$ are in \textsc{PTIME}.
	Hence, we establish the second row of \cref{tab:condition decision}.
\end{corollary}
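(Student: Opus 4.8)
The plan is to obtain \cref{cor:PTIME checks} as an immediate consequence of Theorem~\ref{thm:PTIME check} together with the monotonicity principle recorded in \cref{fact:checking}. The substantive content is already supplied by the theorem: checking conditions~C\ref{condition:conv_halting} and~C\ref{condition:non-negative} jointly is in \textsc{PTIME}, via feasibility of the linear system exhibited in its proof (which in turn rests on the characterization of \cref{lemma:checking via LP}). Every one of the four condition sets named in the corollary contains $\{\textnormal{C\ref{condition:conv_halting},C\ref{condition:non-negative}}\}$, so the strategy is simply to let this \textsc{PTIME} upper bound propagate upward through the subset lattice of conditions.

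Concretely, the set $\{\textnormal{C\ref{condition:conv_halting},C\ref{condition:non-negative}}\}$ is handled directly by Theorem~\ref{thm:PTIME check}. For each of the remaining three sets, namely $\{\textnormal{C\ref{condition:conv_halting},C\ref{condition:non-negative},C\ref{condition:one-type}}\}$, $\{\textnormal{C\ref{condition:conv_halting},C\ref{condition:non-negative},C\ref{condition:sum-to-1}}\}$, and $\{\textnormal{C\ref{condition:conv_halting},C\ref{condition:non-negative},C\ref{condition:sum-to-1},C\ref{condition:one-type}}\}$, I would verify, writing $Y$ for the set in question, the chain
$$
\{\textnormal{C\ref{condition:conv_halting}}\} \subseteq \{\textnormal{C\ref{condition:conv_halting},C\ref{condition:non-negative}}\} \subseteq Y \,,
$$
so that \cref{fact:checking} applies with $X = \{\textnormal{C\ref{condition:conv_halting},C\ref{condition:non-negative}}\}$. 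The fact then transfers the \textsc{PTIME} upper bound of Theorem~\ref{thm:PTIME check} from $X$ to $Y$, yielding \textsc{PTIME} membership for all four sets and thereby establishing the second row of \cref{tab:condition decision}.

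There is no genuine obstacle at the level of the corollary itself. The only point requiring care is the \emph{direction} of \cref{fact:checking}: because a larger condition set $Y$ adds only the easily checkable conditions from $\{\textnormal{C\ref{condition:non-negative},C\ref{condition:sum-to-1},C\ref{condition:one-type}}\}$ on top of the hard condition C\ref{condition:conv_halting} already present in $X$, the problem for $Y$ is \emph{no harder} than the problem for $X$, so it is the upper bound (not the lower bound) that propagates from $X$ to $Y$. One must therefore confirm that each target set sits genuinely \emph{above} $\{\textnormal{C\ref{condition:conv_halting},C\ref{condition:non-negative}}\}$—and in particular that the hypothesis $\{\textnormal{C\ref{condition:conv_halting}}\} \subseteq X$ of the fact is met—rather than merely being comparable to it; the displayed chain above makes this explicit.
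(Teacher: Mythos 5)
Your proposal is correct and follows exactly the paper's own argument: the paper proves this corollary by citing Theorem~\ref{thm:PTIME check} for the base set $\{\textnormal{C\ref{condition:conv_halting},C\ref{condition:non-negative}}\}$ and then invoking \cref{fact:checking} to propagate the \textsc{PTIME} upper bound to the three supersets. Your additional care about the direction of the fact and the hypothesis $\{\textnormal{C\ref{condition:conv_halting}}\} \subseteq X$ is a correct and welcome elaboration, but not a different route.
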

\begin{proof}
	It follows from \cref{thm:PTIME check} and \cref{fact:checking}.
\end{proof}

Finally, it remains the condition decision problem for condition~C\ref{condition:conv_halting}.
We show that this problem is \textsc{coNP}-hard, even under conditions~C\ref{condition:sum-to-1}~and~C\ref{condition:one-type}.

\begin{theorem}
	\label{thm:checking c1,3,4 is conp hard}
	The condition decision problem for $\{\textnormal{C\ref{condition:conv_halting},C\ref{condition:sum-to-1},C\ref{condition:one-type}}\}$ is \textsc{coNP}-hard.
\end{theorem}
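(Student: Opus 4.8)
The plan is to prove \textsc{coNP}-hardness by a polynomial-time reduction from an \textsc{NP}-complete problem to the \emph{complement} of the condition decision problem. Since conditions C\ref{condition:sum-to-1} and C\ref{condition:one-type} are checkable in linear time (\cref{remark:condition}), it suffices to build instances on which C\ref{condition:sum-to-1} and C\ref{condition:one-type} hold by construction, so that deciding the whole set $\{\textnormal{C\ref{condition:conv_halting},C\ref{condition:sum-to-1},C\ref{condition:one-type}}\}$ collapses to deciding condition~C\ref{condition:conv_halting} alone; its complement is exactly ``$\exists\,\mQ\in\mathbf{conv}(\cQ)$ with $\lim_{m}\mQ^m\neq\0$'', i.e.\ the matrix polytope $\mathbf{conv}(\cQ)$ contains a non-Schur-stable matrix ($\rho(\mQ)\ge 1$). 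For a min-only LEMM under C\ref{condition:sum-to-1} (and \emph{without} C\ref{condition:non-negative}, so negative affine coefficients are allowed) this polytope is a product-of-simplices family: each min row ranges independently over the probability vectors supported on $\cJ(i)$, while the affine rows stay fixed. I would therefore reduce from the partition problem $\{a_1,\dots,a_n\}$ and aim to make robust Schur stability of this family equivalent to the \emph{non}-existence of a sign pattern $\epsilon\in\{\pm1\}^n$ with $\sum_i\epsilon_i a_i=0$.

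For the construction I would attach to each integer $a_i$ one min variable whose two options route into a ``$+a_i$'' branch and a ``$-a_i$'' branch feeding a shared rank-one affine feedback gadget, so that a vertex choice $\ell$ realizes a matrix $\mQ^{(\ell)}$ whose only potentially large eigenvalue is a monotone function of $\sigma(\epsilon)=\sum_i\epsilon_i a_i$; the gadget's gain and a large common denominator would be chosen so that $\rho(\mQ^{(\ell)})\le 1$ always, with equality precisely when $\sigma(\epsilon)=0$. All affine rows can be rescaled so that $\qq_k^\top\1+b_k\le 1$, giving condition~C\ref{condition:sum-to-1}, and the instance is min-only, giving condition~C\ref{condition:one-type}; both are verified in linear time, so the reduction targets only condition~C\ref{condition:conv_halting}. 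The forward direction is then a direct eigenvalue computation: from an equal partition I exhibit the vertex matrix with eigenvalue $1$, witnessing that condition~C\ref{condition:conv_halting} fails.

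The main obstacle is the backward direction, and it is genuine: because condition~C\ref{condition:conv_halting} quantifies over the \emph{entire} convex hull, fractional routings let the scalar $\sigma$ take every value in $[-\sum_i a_i,\sum_i a_i]$, so the value $0$ is always attainable by averaging and a naive ``$\sigma=0$'' encoding would be trivially satisfiable. The reduction must therefore be designed so that instability is governed not by the single linear functional $\sigma$ but by a quantity that is extremal at the \emph{vertices} (integral sign patterns) of the family; following a Poljak--Rohn / Oettli--Prager analysis of rank-one interval matrices, the existence of $\mQ\in\mathbf{conv}(\cQ)$ with $\mQ\xx=\xx$ should reduce to the existence of a witness vector $\xx$ for which the optimal choice of routing parameters is integral, so that the combinatorial content is carried by the near-eigenvector $\xx$ rather than by the continuous parameters. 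Making this precise --- establishing the ``vertices are worst'' property, a quantitative stability gap that keeps the whole polytope Schur-stable when the partition is infeasible, and the bridge from a singular member ($\det(\mI-\mQ)=0$) to a member with $\rho(\mQ)\ge1$ via a real eigenvalue crossing at $1$ --- is the technical heart; the remaining verification that the gadget satisfies C\ref{condition:sum-to-1} and is min-only is routine.
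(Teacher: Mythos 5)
There is a genuine gap, and you have correctly diagnosed it yourself: the backward direction of your reduction is not established. Condition~C\ref{condition:conv_halting} quantifies over all of $\mathbf{conv}(\cQ)$, so any encoding in which instability is governed by a linear functional $\sigma$ hitting a target value (here $\sum_i \epsilon_i a_i = 0$) is destroyed by the intermediate value theorem: fractional routings sweep $\sigma$ continuously through $0$ regardless of whether an integral partition exists, so the complement of C\ref{condition:conv_halting} would be trivially a \textsc{Yes}-instance. Your proposed rescue --- a Poljak--Rohn / Oettli--Prager ``vertices are worst'' argument for a rank-one gadget, plus a quantitative stability gap and an eigenvalue-crossing argument --- is precisely the content that would constitute the proof, and none of it is carried out. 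As it stands, the partition problem is a poor fit as a source: an \emph{equality} constraint on a linear functional is structurally at odds with quantification over a convex polytope of matrices.

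The paper avoids this obstacle by reducing from \textsc{SAT} instead. Each clause becomes a max variable choosing among its literals, each propositional variable becomes a max variable choosing between a distinguished affine variable $x_{m+2r+1}$ and its negated copy, and the affine variable carries a self-loop of effective gain $\tfrac{1}{m+1}\bigl(1 + \sum_{i=1}^m t_i\bigr)$, where $t_i \in [-1,1]$ is clause $i$'s (possibly fractional) contribution. Reaching spectral radius $1$ forces \emph{every} $t_i$ to equal $1$ simultaneously, which is only possible when each clause integrally selects a literal consistent with an integral assignment of the shared variable nodes --- i.e.\ at a satisfying assignment. The disjunctive (max-over-literals) structure of \textsc{SAT} thus matches the ``exists a choice in the hull'' semantics and delivers vertex-extremality for free, whereas your construction must fight against it. To repair your proof you would either need to actually prove the extremality and gap claims for your gadget, or switch to a source problem whose witness structure is a conjunction of independently-maximized terms, as in the paper's reduction.
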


\begin{proof}[Main proof idea]
	We show the reduction from the \textsc{SAT} problem in conjunctive normal form with variables $v_1, \ldots, v_r$ and clauses $c_1, \ldots, c_m$ to LEMM as follows.
	
	First, we construct the variables: (i) the max variables $x_1, \ldots, x_m$ associated with $c_1, \ldots, c_m$, (ii) the max variables $x_{m+1}, \ldots, x_{m+r}$ associated with $v_1, \ldots, v_r$, and their negative copies $x_{m+r+1}, \ldots, x_{m+2r}$, as well as (iii) an (affine) variable $x_{m+2r+1}$ and its negative copy $x_{m+2r+2}$.
	
	Then, we construct the transitions: (i) From $x_i$ ($1 \le i \le m$), we can go to $x_{m+l}$ if $v_l$ is a literal in $c_i$, or to the negative copy of $x_{m+l}$ if $\bar v_l$ is literal in $c_i$. (ii) From $x_j$ ($m \le j \le m+r$), we can go to $x_{m+2r+1}$ or its negative copy.
	
	Finally, we show that ``condition~C\ref{condition:conv_halting} does not hold'' if and only if ``from each $x_i$ ($i \in [m]$), we can reach $x_{m+2r+1}$ (or its negative copy) via some $x_{m+l}$ (or its negative copy)''. This corresponds to ``for each $c_i$ ($i \in [m]$), some assignment of $v_l$ makes $c_i$ true'', \ie, the original SAT instance is satisfiable.
\end{proof}

\begin{corollary}
	\label{cor:coNP-hard checks}
	The condition decision problems for {$\{\textnormal{C\ref{condition:conv_halting}}\}$}, {$\{\textnormal{C\ref{condition:conv_halting},C\ref{condition:sum-to-1}}\}$}, {$\{\textnormal{C\ref{condition:conv_halting},C\ref{condition:one-type}}\}$}, or {$\{\textnormal{C\ref{condition:conv_halting},C\ref{condition:sum-to-1},C\ref{condition:one-type}}\}$} are \textsc{coNP}-hard.
	Hence, we establish the first row of \cref{tab:condition decision}.
\end{corollary}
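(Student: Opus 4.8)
The plan is to obtain the corollary as an immediate consequence of Theorem~\ref{thm:checking c1,3,4 is conp hard} together with the monotonicity of condition-checking complexity recorded in \cref{fact:checking}. Theorem~\ref{thm:checking c1,3,4 is conp hard} supplies the only substantive ingredient, namely that the condition decision problem for $\{\textnormal{C\ref{condition:conv_halting},C\ref{condition:sum-to-1},C\ref{condition:one-type}}\}$ is \textsc{coNP}-hard; the corollary itself is a pure transfer argument across the lattice of conditions.

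First I would fix $Y = \{\textnormal{C\ref{condition:conv_halting},C\ref{condition:sum-to-1},C\ref{condition:one-type}}\}$ and verify the relevant set inclusions. Each of the four subsets $X$ named in the statement, that is, $\{\textnormal{C\ref{condition:conv_halting}}\}$, $\{\textnormal{C\ref{condition:conv_halting},C\ref{condition:sum-to-1}}\}$, $\{\textnormal{C\ref{condition:conv_halting},C\ref{condition:one-type}}\}$, and $Y$ itself, satisfies $\{\textnormal{C\ref{condition:conv_halting}}\} \subseteq X \subseteq Y$. In particular every such $X$ contains condition~C\ref{condition:conv_halting}, so the hypothesis of \cref{fact:checking} is met in each case.

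Next I would invoke \cref{fact:checking}: it guarantees that any complexity lower bound for $Y$ also holds for every $X$ with $\{\textnormal{C\ref{condition:conv_halting}}\} \subseteq X \subseteq Y$. Applying this with the \textsc{coNP}-hardness of $Y$ from Theorem~\ref{thm:checking c1,3,4 is conp hard}, the \textsc{coNP}-hardness propagates to all four subsets (the case $X = Y$ being the theorem verbatim). This yields \textsc{coNP}-hardness for each of the four condition decision problems and hence establishes the first row of \cref{tab:condition decision}.

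The real weight of the argument lies not in the corollary but in Theorem~\ref{thm:checking c1,3,4 is conp hard}, whose reduction from \textsc{SAT} does the genuine work; here I only package that lower bound and spread it across the lattice. The single point to be careful about is the direction of transfer in \cref{fact:checking}: it moves a lower bound from the set with more conditions, $Y$, to any contained set $X$ with fewer conditions (but still containing~C\ref{condition:conv_halting}). This is exactly the direction needed, since \textsc{coNP}-hardness is a lower bound and $Y$ is the maximal of the four listed sets.
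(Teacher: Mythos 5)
Your proposal is correct and matches the paper's own argument, which likewise derives the corollary directly from Theorem~\ref{thm:checking c1,3,4 is conp hard} together with \cref{fact:checking}. You correctly verify that each of the four listed sets $X$ satisfies $\{\textnormal{C\ref{condition:conv_halting}}\} \subseteq X \subseteq \{\textnormal{C\ref{condition:conv_halting},C\ref{condition:sum-to-1},C\ref{condition:one-type}}\}$ and that the lower bound transfers in the right direction, so nothing is missing.
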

\begin{proof}
    This follows from Theorem~\ref{thm:checking c1,3,4 is conp hard} and \cref{fact:checking}.
\end{proof}

\section{Conclusions}

In this work, we consider LEMMs with various conditions. While previous results from the literature consider specific cases, our results establish the computational complexity landscape for all possible subsets of conditions and the complexity of checking the conditions.
Exploring our approach (\eg, with condition C\ref{condition:conv_halting} only, or with conditions C\ref{condition:conv_halting} and C\ref{condition:non-negative}) in practical applications is an interesting direction for future work.

\section*{Acknowledgments}

This research was partially supported by the ERC CoG 863818 (ForM-SMArt) grant and the Austrian Science Fund (FWF) 10.55776/COE12 grant.

\bibliography{reference}

\clearpage
\appendix

% \section{Lattice structure}

\section{Appendix: Details and proofs}
\label{sec:proof details}

\subsection{Deferred proofs in \cref{subsec:basic results}}

\begin{proof}[Proof of \cref{lemma:linear system existence&uniqueness}]
	If $(\mI-\mQ)$ is not invertible, then $1$ is an eigenvalue of $\mQ$, and the spectral radius of $\mQ$ is at least $1$, which leads to a contradiction. Therefore, $(\mI-\mQ)$ is invertible.
	Further, since
	$$
	\begin{aligned}
		\mI &= \lim _{m \rightarrow \infty} \left( \mI - \mQ^m \right) \\
		&= (\mI-\mQ) \cdot \lim _{m \rightarrow \infty} \left( \mI+\mQ+\ldots+\mQ^{m-1} \right) ,
	\end{aligned}
	$$
	if $\mQ \ge 0$, we have
	$$
	(\mI-\mQ)^{-1} = \lim _{m \rightarrow \infty} \left( \mI+\mQ+\ldots+\mQ^{m-1} \right) \ge 0 .
	$$
\end{proof}

\subsection{Deferred proofs in \cref{subsec:lb}}

\begin{proof}[Proof of Lemma~\ref{lemma:hardness of non-negative}]
	For all set of natural numbers $\{ a_1, \ldots, a_{m} \} \in \N ^{m}$, we define
	the LEMM with $(n_1,n_2,n,\cJ,\qq,\bb)$ where, for all $1 \le i \le m$ and $m  < k \le 2m$,
	$$
	\begin{alignedat}{2}
		n_1 &= m, \\
		n_2 &= 0, \\
		n &= 2m+1, \\
		\cJ(i) &= \{ i+m, 2m+1 \}, \\
		\qq_k &= [ 2\delta_{(k-m),1} , \ldots, 2\delta_{(k-m),m}, 0, \ldots, 0] \T , \\
		\qq_{2m+1} &= \0 , \\
		\qq_{2m+2} &= [a_1, \ldots, a_{m}, 0, \ldots, 0, 1] \T , \\
		b_{m+1} &= \ldots = b_{2m+1} = 1, \\
		b_{2m+2} &= 0 ,
	\end{alignedat}
	$$
	or equivalently,
	\begin{equation}
		\label{eq:reduction to partition problem}
		\left\{
		\begin{aligned}
			x_i &= \min\ \{ x_{i+n_1} , x_{2n_1+1} \} , \hspace{3em} 1 \le i \le n_1, \\
			x_j &= 2 x_{j-n_1} + 1 , \hspace{6em} n_1 < j \le 2n_1, \\
			x_{2n_1+1} &= 1, \\
			x_{2n_1+2} &= a_1 x_1 + \ldots + a_{n_1} x_{n_1} + x_{2n_1+2} .
		\end{aligned}
		\right.
	\end{equation}
	Note that conditions~C\ref{condition:non-negative}~and~C\ref{condition:one-type} are satisfied.
	To read \cref{eq:reduction to partition problem}, $x_i$ can only take value $1$ or $-1$, for all $ 1 \le i \le n_1$, and the last equation in \eqref{eq:reduction to partition problem} is essentially
	$$
	0 = a_1 x_1 + \ldots + a_{n_1} x_{n_1} .
	$$
	Therefore, the set of natural numbers $\{a_1, \ldots, a_{m}\}$ can be partitioned into two subsets of equal sums if and only if \cref{eq:reduction to partition problem} has a solution, which in turns is true if and only if the answer to the LEMM decision problem defined above is \textsc{Yes}.
	To conclude, \Cref{lemma:hardness of non-negative} follows from the \textsc{NP}-hardness of the partition problem.
\end{proof}

\begin{proof}[Proof of Lemma~\ref{lemma:hardness of sum-to-one}]
	We show a reduction from the LEMM decision problem under $\{\textnormal{C\ref{condition:one-type}}\}$ to the LEMM decision problem under $\{\textnormal{C\ref{condition:sum-to-1},C\ref{condition:one-type}}\}$.
	For all LEMMs $(n_1,n_2,n,\cJ,\qq,\bb)$ where $n_1 = 0$ or $n_2=0$, we define the following LEMM
	\begin{equation}
		\label{eq:reduction sum to 1}
		\begin{cases}
			x^\prime_i = \min\ \left\{ x^\prime_{l} \;\middle|\; {l \in \cJ (i)} \right\} , \quad & 1 \le i \le n_1, \\
			x^\prime_j = \max\ \left\{ x^\prime_{l} \;\middle|\; {l \in \cJ (j)} \right\} , \quad & n_1 < j \le n_1+n_2, \\
			x^\prime_k = \begin{bmatrix} \qq_k \\ - {\qq_k} \T \1 - b_k \end{bmatrix} \T \xx^\prime + b_k, \quad & n_1+n_2 < k \le n , \\
			x^\prime _{n+1} = 0 ,
		\end{cases}
	\end{equation}
	which satisfies conditions C\ref{condition:sum-to-1} and C\ref{condition:one-type}.
	Note that $\xx$ is a solution to the LEMM $(n_1,n_2,n,\cJ,\qq,\bb)$ if and only if $\xx^\prime = \begin{bmatrix} \xx \\ 0 \end{bmatrix}$ is a solution to \cref{eq:reduction sum to 1}.
	
	By \Cref{lemma:hardness of non-negative}, the LEMM decision problem under $\{\textnormal{C\ref{condition:one-type}}\}$ is \textsc{NP}-hard.
	Therefore, the \textsc{NP}-hardness of the LEMM decision problem under $\{\textnormal{C\ref{condition:sum-to-1},C\ref{condition:one-type}}\}$ follows from the above reduction.
\end{proof}

\subsection{Deferred proofs in \cref{subsec:ub}}

\begin{proof}[Proof of Theorem~\ref{thm:1,2,4 is in P}]
	Consider an LEMM $(n_1,n_2,n,\cJ,\qq,\bb)$ satisfying conditions~C\ref{condition:conv_halting},~C\ref{condition:non-negative}~and~C\ref{condition:one-type}.
	Without loss of generality, we assume $n_1 = 0$.
	
	First, we consider the following linear programming problem
	\begin{equation}
		\label{eq:LP for 124}
		\begin{alignedat}{2}
			\min \ &x_1 + \ldots + x_{n} \\
			\st &x_i \ge x_l,\quad & \forall l \in \cJ(i),\ \forall 1 \le i \le n_1+n_2 \,, \\
			&x_k \ge \qq_k\T \xx + b_k,\quad & \forall n_1+n_2 < k \le n \,.
		\end{alignedat}
	\end{equation}
	By \cref{lemma:existence&uniqueness}, the LEMM has a solution.
	Therefore, its solution is a feasible solution to the linear programming problem \eqref{eq:LP for 124}.
	Moreover,
	$$
	\xx \ge \mQ \xx + \bb, \ \forall \mQ \in \cQ \,.
	$$
	By \cref{lemma:linear system existence&uniqueness}, we have that
	$$
	\1 \T \xx = \1 \T (\mI - \mQ)^{-1} (\mI - \mQ) \xx \ge \1 \T (\mI - \mQ)^{-1} \bb \ge 0 \,.
	$$
	Therefore, the linear programming problem~\eqref{eq:LP for 124} has a finite optimal solution.
	
	Let $\xx^*$ be one of the optimal solutions, which can be obtained in polynomial time.
	We show that $\xx^*$ is the solution to the LEMM:
	\begin{itemize}
		\item
		For all $1 \le i \le n_1+n_2$, if $\xx^*_i > \max _{l \in \cJ (i)} \ \xx^*_l$, we can construct $\xx^\prime \in \R^n$ as follows:
		$$
		\xx^\prime_j = \begin{cases}
			\xx^*_j, & j \neq i \,, \\
			\max\ _{l \in \cJ (i)} \ \xx^*_l, & j = i \,.
		\end{cases}
		$$
		Then, $\xx^\prime$ is also feasible and it violates the optimality of $\xx^*$.
		Therefore, we must have $\xx^*_i = \max _{l \in \cJ (i)} \ \xx^*_l$.
		\item
		For all $n_1+n_2 < k \le n$, if $\xx^*_k > \qq_k \T \xx^* + b_k$, we construct $\xx^{\prime\prime} \in \R^n$ as follows:
		$$
		\xx^{\prime\prime}_j = \begin{cases}
			\xx^*_j, & j \neq k \,, \\
			\qq_k \T \xx^* + b_k, & j = k \,.
		\end{cases}
		$$
		Then, $\xx^{\prime\prime}$ is also feasible and it violates the optimality of $\xx^*$.
		Therefore, we must have $\xx^*_k = \qq_k \T \xx^* + b_k$.
	\end{itemize}
	
	By \Cref{lemma:existence&uniqueness}, the solution $\xx^*$ is unique to the LEMM. Hence we can answer the LEMM decision problem using this solution.
\end{proof}

\subsection{Deferred proofs in \cref{subsec:equiv}}

\begin{proof}[Proof of Lemma~\ref{lemma:equivalence c1=c134}]
	Consider an LEMM $(n_1, n_2, n, \cJ, \qq, \bb)$ satisfying condition~C\ref{condition:conv_halting}.
	Define, for $i \in [1,n_1]$,
	\begin{align*}
		\cJ ^\prime (i)
		&= \{ l+2n \mid l \in \cJ(i) \cap [1,n_1] \} \ \cup \\
		&\qquad \{ l+n \mid l \in \cJ(i) \cap [n_1+1,n] \}
	\end{align*}
	for $i \in [n_1+1, n_1+n_2]$,
	\begin{align*}
		\cJ ^\prime (i)
		&= \{ l+n \mid l \in \cJ(i) \cap [1,n_1] \} \ \cup \\
		&\quad \{ l+2n \mid l \in \cJ(i) \cap [n_1+1,n] \}
	\end{align*}
	and
	$$
	\bar \qq _k = \begin{bmatrix}
		\0_{n \times n_1} & \0_{n \times (n-n_1)} \\
		\mI _{n_1} & \0_{n_1 \times (n-n_1)} \\
		\0_{n \times n_1} & \0_{n \times (n-n_1)} \\
		\0_{(n-n_1) \times n_1} & \mI _{n-n_1}
	\end{bmatrix} \cdot \qq_k , \quad n_2 < k \le n.
	$$
	We consider the following LEMM: 
	\begin{equation}
		\label{eq:reduction to C 1,3,4}
		\left\{
		\begin{aligned}
			x^\prime_i 
				&= \min\ \{ x^\prime_{l} \mid {l \in \cJ^\prime (i)} \} , 
				& 1 \le i \le n_1+n_2, \\
			x^\prime_k 
				&= \begin{bmatrix} \bar \qq _k \\ - {\bar \qq _k} \T \1 + b_k \end{bmatrix} \T \xx^\prime - b_k, 
				& n_1+n_2 < k \le n , \\
			x^\prime_{l} 
				&= -x^\prime_{l-n}, 
				& n < l \le 2n, \\
			x^\prime_{m} 
				&= x^\prime_{m-2n}, 
				& 2n < m \le 3n, \\
			x^\prime _{3n+1} 
				&= 0 ,
		\end{aligned}
		\right.
	\end{equation}
	which satisfies conditions~C\ref{condition:conv_halting},~C\ref{condition:sum-to-1}~and~C\ref{condition:one-type}.
	Then, $\xx$ is a solution to \eqref{eq:Problem} if and only if $\xx^\prime = \begin{bmatrix}
		\tilde \xx \\ - \tilde \xx \\ \tilde \xx \\ 0
	\end{bmatrix} $ is the solution to \eqref{eq:reduction to C 1,3,4}, where
	$$
	\tilde \xx = \begin{bmatrix} \mI _{n_1} & \0 _{n_1 \times (n-n_1)} \\ \0 _{(n-n_1) \times (n-n_1)} & - \mI _{n-n_1} \end{bmatrix} \cdot \xx \,.
	$$
	Therefore, the LEMM decision problem under $\{ \textnormal{ C\ref{condition:conv_halting}, C\ref{condition:sum-to-1}, C\ref{condition:one-type} } \}$ is as hard as the LEMM decision problem under $\{\textnormal{C\ref{condition:conv_halting}}\}$.
\end{proof}

\subsection{Deferred proofs in \cref{sec:checking conditions}}

We first state Lemma~\ref{lemma:nonnegative entries} and give its proof.
\begin{lemma}
	\label{lemma:nonnegative entries}
	Assume $\mQ \in \R _{\ge 0} ^{n \times n} $. The following conditions are equivalent:
	\begin{enumerate}
		\item \label{equiv:halting} $\lim _{m \rightarrow \infty} \mQ^m = \0_{n \times n}$;
		\item \label{equiv:invert} $(\mI-\mQ)$ is invertible and $(\mI-\mQ)^{-1} \ge 0$;
		\item \label{equiv:lp} \( \exists \xx \in \R_{\ge 0}^n \), \st \( \xx \ge \mQ \xx + \1 \).
	\end{enumerate}
\end{lemma}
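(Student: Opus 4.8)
The plan is to prove the chain of equivalences for a \emph{non-negative} matrix $\mQ$ by showing \ref{equiv:halting}$\Rightarrow$\ref{equiv:invert}$\Rightarrow$\ref{equiv:lp}$\Rightarrow$\ref{equiv:halting}. The first implication is essentially \Cref{lemma:linear system existence&uniqueness}: that lemma already gives that $\lim_m \mQ^m = \0$ implies $(\mI-\mQ)$ is invertible, and the non-negativity of $(\mI-\mQ)^{-1}$ follows from the Neumann series $(\mI-\mQ)^{-1} = \sum_{m \ge 0} \mQ^m$, each term of which is non-negative since $\mQ \ge 0$. So this step is already done for us and I would simply cite it.

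For \ref{equiv:invert}$\Rightarrow$\ref{equiv:lp}, the natural candidate is $\xx = (\mI-\mQ)^{-1}\1$. Since $(\mI-\mQ)^{-1} \ge 0$ and $\1 \ge 0$, we immediately get $\xx \ge 0$. Moreover $(\mI - \mQ)\xx = \1$, which rearranges to $\xx = \mQ\xx + \1 \ge \mQ\xx + \1$, so the inequality in \ref{equiv:lp} holds (with equality, in fact). This step is a one-line verification once the candidate vector is written down.

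The main work is \ref{equiv:lp}$\Rightarrow$\ref{equiv:halting}. I would argue as follows: suppose $\xx \ge 0$ satisfies $\xx \ge \mQ\xx + \1$. Iterating this inequality and using $\mQ \ge 0$ (so that $\mQ$ is monotone on non-negative vectors), one gets $\xx \ge \mQ^m \xx + (\mI + \mQ + \cdots + \mQ^{m-1})\1$ for every $m$. Since all quantities are non-negative, the partial sums $\sum_{t=0}^{m-1} \mQ^t \1$ are bounded above by $\xx$ entrywise and are monotone increasing in $m$; hence the series $\sum_{t \ge 0} \mQ^t \1$ converges. Convergence of this series forces $\mQ^t \1 \to \0$, and since $\mQ \ge 0$ one can strengthen this to $\mQ^t \to \0_{n \times n}$ entrywise: each column of $\mQ^t$ is dominated entrywise by $\mQ^t \1$, because $\1$ dominates every standard basis vector $\ee_j$ and $\mQ^t$ is non-negative, so $\0 \le \mQ^t \ee_j \le \mQ^t \1 \to \0$.

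The step I expect to require the most care is establishing that $\mQ^t \1 \to \0$ from the boundedness of the partial sums, and then upgrading convergence of $\mQ^t \1$ to convergence of the full matrix power $\mQ^m$. The first part is the standard fact that the general term of a convergent (non-negative) series tends to zero; the second part uses the column-domination argument above, which crucially relies on $\mQ \ge 0$. A cleaner alternative I would consider for \ref{equiv:lp}$\Rightarrow$\ref{equiv:halting} is to invoke Perron--Frobenius theory: the existence of a strictly positive vector $\xx$ with $\mQ\xx \le \xx - \1 < \xx$ (strict since $\1 > \0$, at least entrywise after noting $\xx \ge \1$) bounds the spectral radius of the non-negative matrix $\mQ$ strictly below $1$, which is equivalent to $\mQ^m \to \0$. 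I would likely present the elementary series argument as the primary proof to keep the excerpt self-contained, mentioning the spectral-radius viewpoint only as intuition.
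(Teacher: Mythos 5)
Your proof is correct, but it closes the cycle differently from the paper. The first two implications (\ref{equiv:halting}$\Rightarrow$\ref{equiv:invert} via \Cref{lemma:linear system existence&uniqueness} and the Neumann series, and \ref{equiv:invert}$\Rightarrow$\ref{equiv:lp} via $\xx = (\mI-\mQ)^{-1}\1$) match the paper exactly. For the remaining direction, the paper does not prove \ref{equiv:lp}$\Rightarrow$\ref{equiv:halting} directly: it proves \ref{equiv:lp}$\Rightarrow$\ref{equiv:invert} by running the monotone, non-negative fixed-point iteration $\xx^{(i;t+1)} = \mQ\xx^{(i;t)} + \ee_i$ from the subinvariant vector $\xx$, obtaining limits $\xx^{(i)}$ that assemble into a non-negative matrix $[\xx^{(1)},\ldots,\xx^{(n)}]$ inverting $(\mI-\mQ)$; it then separately proves \ref{equiv:invert}$\Rightarrow$\ref{equiv:halting} by Perron--Frobenius (a spectral radius $\rho \ge 1$ would yield a non-negative eigenvector $\vv$ forced to be $\0$ by the sign of $(\mI-\mQ)^{-1}(1-\rho)\vv$). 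Your route instead telescopes $\xx \ge \mQ^m\xx + \sum_{t=0}^{m-1}\mQ^t\1$, deduces convergence of the non-negative series $\sum_t \mQ^t\1$, hence $\mQ^t\1\to\0$, and upgrades to $\mQ^t\to\0$ by column domination $\0 \le \mQ^t\ee_j \le \mQ^t\1$; each step is sound and the non-negativity of $\mQ$ is used exactly where needed. Your argument is more elementary, avoiding Perron--Frobenius entirely; the paper's detour through item~\ref{equiv:invert} produces the non-negative inverse explicitly as a byproduct of the iteration, which is the form the paper later reuses (e.g.\ in the proof of \Cref{lemma:checking via LP}). Either way the three statements are linked by a complete set of implications.
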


\begin{proof}[Proof of Lemma~\ref{lemma:nonnegative entries}]
	We give a circular sequence of implications.
	
	(\ref{equiv:halting}$\Rightarrow$\ref{equiv:invert}).
	It is direct from \cref{lemma:linear system existence&uniqueness}.
	
	(\ref{equiv:invert}$\Rightarrow$\ref{equiv:lp}).
	For $\xx = (\mI-\mQ)^{-1} \1 \in \R_{\ge 0}^n$, we have $\xx = \mQ\xx + \1$.
	
	(\ref{equiv:lp}$\Rightarrow$\ref{equiv:invert}).
	Assume \(\xx \in \R_{\ge 0}^n\), \st \(\xx \ge \mQ \xx + \1 \). Then, \(\xx \ge \mQ \xx + \ee_i\), for all $i \in [n]$.
	For all $i \in [n]$, let $\xx^{(i;0)} = \xx$, and $\xx^{(i;t+1)} = \mQ \xx^{(i;t)} + \ee_i$, for all $t = 0, 1, \ldots$
	We have that, for every $t \in \N$,
	$$
	\xx^{(i;0)} \ge \xx^{(i;1)} \ge \ldots \ge \xx^{(i;t)} \ge \ldots \ge 0 \,.
	$$
	Therefore, we can assume the sequence $\left( \xx ^{(i;t)} \right) _{t \in \N}$ converges to $\xx^{(i)}$.
	So, we have
	$$
	\xx^{(i)} \ge 0 \text{ and } \xx^{(i)} = \mQ \xx^{(i)} + \ee_i .
	$$
	Therefore,
	\begin{align*}
		\mI
		&= [\ee_1, \ldots, \ee_n] \\
		&= [(\mI-\mQ)\xx^{(1)}, \ldots, (\mI-\mQ)\xx^{(n)}] \\
		&= (\mI-\mQ) \cdot [\xx^{(1)}, \ldots, \xx^{(n)}] ,
	\end{align*}
	and we have that $(\mI-\mQ)$ is invertible and
	$$
	(\mI-\mQ)^{-1} = [\xx^{(1)}, \ldots, \xx^{(n)}] \ge 0 \,.
	$$
	
	(\ref{equiv:invert}$\Rightarrow$\ref{equiv:halting}).
	Let the spectral radius of $\mQ$ be $\rho$.
	Assume $\rho \ge 1$.
	By the Perron-Frobenius Theorem (non-negative matrix version),
	$$
	\exists \vv \in \R_{\ge 0}^n \setminus \{ \0 \}, \quad \mQ \vv = \rho \vv .
	$$
	In view of $(\mI-\mQ)\vv = \vv - \mQ\vv = (1-\rho) \vv$, we have
	$$
	0 \le \vv = (\mI-\mQ)^{-1} \cdot (1-\rho) \vv \le 0 .
	$$
	So, $\vv = \0$ and we have the contradiction. Therefore, $\rho < 1$, \ie\ $\lim _{m \rightarrow \infty} \mQ^m = \0_{n \times n}$.
\end{proof}

\begin{proof}[Proof of Lemma~\ref{lemma:checking via LP}]
	We present each direction separately.
	
	(if)
	Consider \( \xx \in \R_{\ge 0}^n\), such that, for all \( \mQ \in \cQ\), we have \(\xx \ge \mQ \xx + \1 \).
	Fix any $\mQ = \sum _{i \in \cI} \alpha_i \mQ_i$, in which $\mQ_i \in \cQ,\ \alpha_i \ge 0$, for all $i \in \cI$, and $\sum _{i \in \cI} \alpha_i = 1 $.
	By algebraic manipulations,
	$$
	\mQ \xx + \1 = \sum _{i \in \cI} \alpha_i (\mQ_{i} \xx + \1) \le \sum _{i \in \cI} \alpha_i \xx = \xx \,.
	$$
	Therefore, by \cref{lemma:nonnegative entries}, $\lim _{m \rightarrow \infty} \mQ ^{m} = \0 _{n \times n}$.
	
	(only if)
	By \cref{lemma:existence&uniqueness}, there exists $\xx$ such that
	\begin{equation}
		\label{eq:max construction}
		\xx = \max \ \left\{ \mQ\xx + \1 \mid {\mQ \in \cQ} \right\} .
	\end{equation}
	Therefore, $\xx \ge \mQ \xx + \1$, for all $\mQ \in \cQ$.
	Also,
	$$
	\exists \mQ \in \cQ, \ \st \xx = \mQ\xx + \1,
	$$
	so, by \Cref{lemma:nonnegative entries}, we have
	$$
	\xx = (\mI-\mQ)^{-1} \1 \ge 0 \,.
	$$
\end{proof}

\begin{proof}[Proof of Theorem~\ref{thm:checking c1,3,4 is conp hard}]
	For a \textsc{SAT} instance in conjunctive normal form with variables $v_1, \ldots, v_r$ and clauses $c_1, \ldots, c_m$, define, for all $1 \le i \le m$,
	\begin{align*}
		s(i)
		&= \left\{ m+l \mid v_l \text{ is a literal in } c_i \right\} \cup \\
		&\qquad \left\{ m+r+l \mid \neg v_l \text{ is a literal in } c_i \right\} \,.
	\end{align*}
	We construct the LEMM $(n_1,n_2,n,\cJ,\qq,\bb)$, in which, for all $1 \le i \le m$,  $m < j \le m+r$, and $m+r < k \le m+2r$
	$$
	\begin{alignedat}{2}
		n_1 &= 0, \\
		n_2 &= m+r, \\
		n &= m+2r+2, \\
		\cJ(i) &= s(i), \\
		\cJ(j) &= \{ m+2r+1, m+2r+2 \}, \\
		\qq_{k} &= -\ee_{k-r}, \\
		\qq_{m+2r+1} &= \frac 1 {m+1} \cdot \begin{bmatrix}
			\1_{m} \\
			\0_{2r} \\
			1 \\
			0
		\end{bmatrix} , \\
		\qq_{m+2r+2} &= -\ee_{m+2r+1} , \\
		\bb = \0.
	\end{alignedat}
	$$
	It satisfies conditions~C\ref{condition:sum-to-1}~and~C\ref{condition:one-type}.
	Moreover, we have ``condition~C\ref{condition:conv_halting} holds'' if and only if ``$c_1 \land \ldots \land c_m$ is unsatisfiable''. Because the \textsc{SAT} problem is \textsc{NP}-hard, checking condition~C\ref{condition:conv_halting} is \textsc{coNP}-hard.
\end{proof}

\end{document}